\renewcommand{\le}{\leqslant}
\renewcommand{\leq}{\leqslant}
\renewcommand{\ge}{\geqslant}
\renewcommand{\geq}{\geqslant}
\renewcommand{\epsilon}{\varepsilon}
\newcommand{\outern}{\beta}
\newcommand{\innern}{\sigma}
\newcommand{\range}{\gamma}
\newcommand{\textsfup}[1]{\textsf{\textup{#1}}}
\newcommand{\xor}{\textsfup{2XOR}}
\newcommand{\xorf}{\xi}
\newcommand{\kxor}{\textsfup{kXOR}}
\newcommand{\ksum}{\textsfup{kSUM}}
\newcommand{\Q}{\textsfup{Q}}
\newcommand{\HH}{\textsfup{H}}
\newcommand{\HHsub}{\textsfup{\scriptsize H}}
\newcommand{\psearch}{\textsfup{pSEARCH}}
\newcommand{\nth}[1]{{#1}^{\textup{th}}}
\newcommand{\allone}{\mathds{1}}
\newcommand{\advpm}{\textsfup{ADV}^{\pm}}
\newcommand{\F}{\textsfup{F}}
\newcommand{\G}{\textsfup{G}}
\newcommand{\eps}{\varepsilon}
\newcommand{\SSS}{\mbox{\textsfup{S}}}
\newcommand{\UU}{\mbox{\textsfup{U}}}
\newcommand{\CC}{\mbox{\textsfup{C}}}
\newcommand{\adv}{{\mathcal A}}
\newtheorem{theorem}{Theorem} 
\newtheorem{definition}{Definition}
\newtheorem{lemma}{Lemma}
\newtheorem{claim}{Claim}
\newtheorem{protocol}{Protocol}{\itshape}{\normalfont}
\date{5 December 2014}
\begin{document}

\title{Key establishment \`a la Merkle in a quantum world\,\thanks{A preliminary version of this paper appeared in the Proceedings of \textsc{Crypto}~2011,
Phil Rogaway (editor).}}

\renewcommand\Authands{ and }
\renewcommand\Affilfont{\slshape\small}

\author[1,2,3]{Gilles Brassard}
\author[4,2]{Peter H{\o}yer}
\author[5]{Kassem Kalach}
\author[6]{\authorcr Marc Kaplan}
\author[7]{Sophie Laplante}
\author[1]{Louis Salvail}

\affil[1]{\,D\'epartement IRO, Universit\'e de Montr\'eal, Montr\'eal (QC),
 H3C 3J7~~Canada}
\affil[2]{\,Canadian Institute for Advanced Research}
\affil[3]{\,Institute for Theoretical Studies, ETH Z\"urich, Switzerland}
\affil[4]{\,Department of Computer Science, University of Calgary, Calgary (AB),T2N 1N4~~Canada}
\affil[5]{\,Department of Combinatorics \& Optimization and Institute for Quantum Computing \authorcr
\textsl{University of Waterloo, Waterloo, Ontario, Canada}}
\affil[6]{\,LTCI, Telecom ParisTech, Paris, France}
\affil[7]{\,LIAFA, Universit\'e Paris Diderot Paris 7, Paris, France\authorcr
\textup{\small \{brassard,\,salvail\}@iro.umontreal.ca, hoyer@ucalgary.ca, k2kalach@uwaterloo.ca, kaplan@telecom-paristech.fr, laplante@liafa.univ-paris-diderot.fr}}

\maketitle

\thispagestyle{empty}

\begin{abstract}
In 1974, Ralph Merkle proposed the first unclassified scheme for
secure communications over insecure channels. When
legitimate communicating parties 
are willing to spend an amount of computational effort proportional to some
parameter~{$N$}, an eavesdropper cannot break into their communication
without spending a time proportional to~{$N^2$}, which is quadratically
more than the legitimate effort.
Two of us showed in 2008 that Merkle's schemes are completely insecure
against a quantum adversary, but that their security can be partially
restored if the legitimate parties are also allowed to use quantum computation:
the eavesdropper needed to spend
a time proportional to $N^{3/2}$ to break our earlier quantum scheme.
Furthermore, all previous \emph{classical} schemes could be broken completely
by the onslaught of a quantum eavesdropper and we conjectured that
this is unavoidable.

We give now two novel key establishment schemes in the spirit of Merkle's.
The first one
can be broken by a quantum adversary who makes an effort proportional to $N^{5/3}$,
which is the optimal attack against this scheme.
Our second scheme is purely classical, yet it cannot be broken by a
quantum eavesdropper who is only willing to expend an effort proportional
to that of the legitimate parties.

We then introduce two families of more elaborate protocols. The first family consists in quantum
protocols whose security is arbitrarily close to quadratic in the query complexity model.
The second is a family of classical protocols
whose security against a quantum adversary is arbitrarily close to $N^{3/2}$ in the same model.
\end{abstract}

\noindent
\textbf{Keywords:}
Merkle Puzzles, Key Establishment, Quantum Cryptography.

\newpage

\section{Introduction}

While Ralph Merkle was delivering the 2005 International Association for Cryptologic Research (IACR)
Distinguished Lecture at the \textsc{Crypto} annual conference in Santa Barbara,
describing his original unpublished 1974 scheme~\cite{CS244} for public key establishment
(much simpler and more elegant than his subsequently published, yet better known, Merkle Puzzles~\cite{merkle78}),
one of us (Brassard) immediately realized that this scheme was totally insecure
against an eavesdropper equipped with a quantum computer.
The obvious question was: can Merkle's idea be repaired and made secure again
in our quantum world? The defining characteristics of Merkle's protocol are that (1)~the legitimate parties
communicate strictly through an authenticated classical channel
on which eavesdropping is unrestricted
and (2)~a protocol is deemed to be \emph{secure}
if the cryptanalytic effort required of the eavesdropper to learn the key established by the legitimate
parties grows super-linearly with the legitimate work.

Two of us (Brassard and Salvail~\cite{ICQNM}) partially repaired Merkle's idea in 2008
with a scheme in which the eavesdropper needs an amount of work in $\Omega\big(N^{3/2}\big)$
to obtain the key established by quantum legitimate parties whose amount of work is in~$O(N)$.
This was not quite as good as the work in $\Omega(N^2)$ required by a \emph{classical} eavesdropper
against Merkle's original scheme,
but significantly better than the work in $O(N)$ sufficient for a
\emph{quantum} eavesdropper against the same scheme.
Two main questions were left open in Ref.~\cite{ICQNM}:
\begin{enumerate}
\item Can the quadratic security possible in a classical world be restored in our quantum world?
\item Is any security possible at all if the legitimate parties are purely classical, yet the eavesdropper is endowed with a quantum computer?
\end{enumerate}

We give two novel key establishment protocols to address these issues. In~the first
protocol, the legitimate parties use quantum computers and classical authen\-ticated communication to establish
a shared key after $O(N)$ expected queries to two black-box random functions
(which can be modelled  with a single binary random oracle).
We~then give a nontrivial quantum cryptanalytic attack that uses a quantum walk in a Hamming graph,
which enables a quantum eavesdropper to learn the key after $\Theta\big(N^{5/3}\big)$ queries
to the functions. Finally, we prove that our attack is optimal up to logarithmic factors. 

Second, we give a \emph{purely classical} protocol, in which the legitimate parties use classical communication and classical computation
to establish a key after $O(N)$ calls to similar black-box random functions. We~then attack this protocol with a
quantum cryptanalytic algorithm
that uses $\Theta\big(N^{7/6}\big)$ queries to the functions. As~unlikely as it may sound,
this attack is optimal (up to logarithmic factors)
and therefore it is not possible to break this purely classical protocol with a quantum attack that uses
an amount of resource linear in the legitimate effort.

Finally, we present two families of protocols extending the ideas presented in the previous sections. The first one
is a family of quantum
protocols whose security is arbitrarily close to quadratic. However, we do not know how to make these
protocols time-efficient, except for the first two in the family.
Our best protocol requires the eavesdropper's effort to be $\Omega\big(N^{7/4}\big)$ in the legitimate parties' amount of work.
The second is a family of classical protocol whose security is arbitrarily close to $N^{3/2}$.
This time, however, we only know how to make time-efficient the first protocol of the family, which is
in fact none other than the one mentioned in the previous paragraph.

After a review of Merkle's original idea,
its meltdown against a quantum eavesdropper and our earlier
partial quantum solution (Section~\ref{original}), we describe
our new protocols (Sections~\ref{newprotQ} and~\ref{newprotC}),
quantum \mbox{attacks} against them (\mbox{Sections}~\ref{attackQ} and~\ref{attackC}) and
proofs of optimality for those \mbox{attacks} (\mbox{Sections}~\ref{lowerboundQ} and~\ref{lowerboundC}).
We then extend these protocols in two families of more elaborate quantum and classical protocols
(Section~\ref{sec:general}).
The time complexity of our protocols is analysed in Section~\ref{sec:time}.
As~a technical tool needed in our proofs of lower bounds,
we prove a new composition theorem of potential independent interest in Section~\ref{sec:compthm}.
Finally, we conclude in Section~\ref{conclusion} with conjectures about the existence of even better schemes. 

\section{Merkle's Original Scheme and How to Break and Partially \\ Repair~It with Quantum Computers}\label{original}

The first unclassified document ever written that pioneered public key establishment and
public key cryptography was a project proposal written in 1974 by Merkle
when he was a student in Lance Hoffman's CS244 course on Computer Security
at the University of California, Berkeley~\cite{CS244}.
Hoffman rejected the proposal and Merkle dropped the course
but ``kept working on the idea'' 
and eventually published it
as one of the most seminal cryptographic papers in the second half of the
twentieth century~\cite{merkle78}.
Merkle's scheme in his published \mbox{paper} was somewhat different from his
original 1974 idea, but both share the property that they
``force any enemy to expend an amount of work which increases as
the square of the work required of the two [legitimate] communicants''~\cite{merkle78}.
It~took 35 years before Boaz Barak and Mohammad Mahmoody-Ghidary proved that
this quadratic discrepancy between the legitimate and eavesdropping efforts are the
best possible in a classical world~\cite{BarMah09}.

In his IACR Distinguished Lecture\,\footnote{\,\url{www.iacr.org/publications/dl/ann2005.html}.},
which he delivered at the \textsc{Crypto}\,'05 Conference in Santa Barbara,
Merkle described from memory his first solution to the problem of
secure communications over insecure channels.
As~a wondrous coincidence, 
he unsuspectingly opened up a box of old folders
a mere three weeks after his Lecture and
happily recovered his long-lost CS244 Project Proposal, together with
comments handwritten by Hoffman~\cite{CS244}!
To~quote his original typewritten words:

\begin{quote}
\texttt{%
\mbox{}Method 1:~~~~Guessing.~~~Both sites guess at keywords.~~These\\
\mbox{}~~~~~~~~~~~~~guesses are one-way encrypted, and transmitted to the\\
\mbox{}~~~~~~~~~~~~~other site.~~If~both sites should chance to guess at\\
\mbox{}~~~~~~~~~~~~~the same keyword, this fact will be discovered when\\
\mbox{}~~~~~~~~~~~~~the encrypted versions are compared, and this keyword\\
\mbox{}~~~~~~~~~~~~~will then be used to establish a communications link.\\
\mbox{}Discussion:~~No, I am not joking.}
\end{quote}

In~more modern terms, let $f$ be a one-way permutation.
In~order to ``one-way encrypt'' $x$, as Merkle wrote in 1974,
we assume that one can
compute $f(x)$ in unit time for any given input~$x$
but that the only way to retrieve~$x$
given $f(x)$ is to try preimages
and compute $f$ on them
until one is found
that maps to~$f(x)$.
This is known as the \emph{black box} (or~\emph{oracle}) model.
Accordingly, throughout this paper, with the exception of Section~\ref{sec:time},
efficiency is defined \emph{solely} in terms
of the number of calls to such black-box functions (there could be more than~one).
In~the quantum case,
these calls can be made in a superposition of inputs.
We~also assume throughout this paper (as~did Merkle) that
an authenticated channel is available between the legitimate
communicants, although this channel offers no protection against
eavesdropping. 

The ``keywords'' guessed at by ``both sites'' are random points in the domain of~$f\!$.
They are ``one-way encrypted'' by applying $f$ to them.
If~there are $N^2$ points in the domain of~$f\!$, it suffices to guess $O(N)$
keywords at each site before it becomes
overwhelmingly likely that ``both sites should chance to guess at the
same keyword'', which becomes their shared key.
An~eavesdropper who listens to the entire conversation has no other
way to obtain this key than to invert $f$ on the revealed common
encrypted keyword. In~accordance with the black-box model,
this can only be done by trying on the average half the points in the
domain of~$f\!$ before one is found that is mapped by $f$ to the
target value. This will require an expected number
of calls to $f$ in~$\Omega(N^2)$, which is quadratic in the legitimate effort.

Shortly thereafter, Whitfield Diffie and Martin Hellman discovered
a cele\-brated method for public-key establishment that makes the cryptanalytic effort \emph{appar\-ently}
exponentially harder than the legitimate effort~\cite{DH}.
However, no proof is known that the Diffie-Hellman scheme
is secure at all since it relies on the conjectured difficulty of
extracting discrete logarithms, an assumption doomed to fail whenever
quantum computers become available~\cite{shor}. In~contrast, Merkle's \mbox{approach}
offers provable quadratic security against any possible classical
attack, under the sole assumption that $f$ cannot be inverted by any other
means than exhaus\-tive search.

Next, we explain why Merkle's original proposal becomes completely insecure if
the eavesdropper is capable of quantum computation
(Merkle's published ``puzzles''~\cite{merkle78} are equally insecure~\cite{ICQNM}).
We~then sketch our earlier solution for a protocol that is not completely broken~\cite{ICQNM}.
This is achieved by granting similar
quantum computation capabilities to one of the legitimate communicating parties.

\subsection{Quantum Attack and Partial Remedy}\label{firstfix}

Let us now assume that function $f$ can be computed quantum mechanically
on a superposition of inputs. In~this case, Merkle's original scheme is
completely compromised by way of Grover's algorithm~\cite{grover}.
Indeed, this algorithm needs only \mbox{$O\big(\sqrt{N^2}\,\big) = O(N)$}
calls on $f$ in order to invert it on any given point of its image,
making the cryptanalytic task as easy (up to constant factors)
as the legitimate key set-up process.\,\footnote{\,If an unstructured search problem
has $t$ solutions among $M$ candidates, Grover's algorithm~\cite{grover},
or more precisely its so-called BBHT generalization~\cite{BBHT}, can find
one of the solutions after $O\big(\sqrt{M/t}\,\big)$ \emph{expected} calls to
a function that recognizes solutions among candidates. However, 
Theorem~4 of Ref.~\cite{BHMT} implies that, whenever the
number $t>0$ is known, a solution can be found
\emph{with certainty} after $O\big(\sqrt{M/t}\,\big)$ calls to that function in the \emph{worst case}.
From now on, when we mention Grover's algorithm or BBHT, we~really mean this
improvement according to Ref.~\cite{BHMT}.}

To remedy the situation, we allow the communicating parties
to use quantum computers as well
(actually, one of the parties will remain classical), and
we~increase the domain of $f$ from $N^2$ to $N^3$
points. Instead of having both sites transmit one-way encrypted guesses
to the other site, one site called Alice chooses $N$ distinct random values
$x_1, x_2,\ldots, x_N$
and transmits them, one-way encrypted by the application of $f\!$,
to the other site called Bob.
Let~\mbox{$Y=\{f(x_i) \mid 1 \le i \le N \}$} \mbox{denote} the set of
encrypted keywords received by Bob, which becomes known to the eavesdropper.
Now, Bob defines Boolean function $g$ on the same domain as~$f$ by
\[ g(x) = \left\{
\begin{array}{ll} 1 & \textrm{if $f(x) \in Y$} \\[1ex] 0 & \textrm{otherwise}. \end{array}
\right. \]

Out of $N^3$ points in the domain of~$f\!$, there are exactly \mbox{$t=N$}
solutions to the problem of finding an $x$ so that \mbox{$g(x)=1$}. 
It~suffices for Bob to apply
the BBHT generalization~\cite{BBHT} of
Grover's algorithm~\cite{grover}, which finds such an $x$ after
\mbox{$O\big(\sqrt{N^3/t}\,\big) = O\big(\sqrt{N^2}\,\big) = O(N)$}
calls on~$g$ (and~therefore on~$f$).
Bob sends back $f(x)$ to Alice, who knows the value of $x$ because
she was careful to keep her randomly chosen points.
Therefore, it suffices of $O(N)$ 
calls\,\footnote{\,If we cared about computational
efficiency instead of only query complexity, Bob would sort the elements of $Y$ in increasing order after receiving them from
Alice.  In this way, he can quickly determine, given any~\mbox{$y=f(x)$}, whether or not~\mbox{$y \in Y\!$},
which is needed to compute function~$g$.
More on computational efficiency in Section~\ref{sec:time}.\label{foot:computational}}
on~$f$ by Alice and Bob
for them to agree on key~$x$.

The eavesdropper, on the other hand, is faced again with the need to invert $f$
on a specific point of its image. Even with a quantum computer,
this requires a number of calls on $f$ proportional to
the square root of the number of points in its domain~\cite{BBBV}, which is
\mbox{$\Omega\big(\sqrt{N^3}\,\big) = \Omega\big(N^{3/2}\big)$}.
This is more effort than what is required of the legitimate parties,
yet less than quadratically so, as would have been possible in a classical world.
Even though we have avoided the meltdown of Merkle's
original \mbox{approach}, the introduction of quantum computers available to
all sides seems to be to the advantage of the codebreakers.
Can we remedy this situation?
Furthermore, is any security possible at all against a quantum computer
if both legitimate parties are restricted to being purely classical?
We~address these two questions in the rest of this paper.

\section{Improved Quantum Key Establishment Scheme}\label{newprotQ}

The adjective \emph{negligible} describes any function that decreases faster than the inverse of any polynomial.
Formally,  a function $\nu: \mathbb{N} \rightarrow \mathbb{R}$ is negligible if
for any constant $k$, there exists $N_{k}$ such that for all $N\ge N_{k}$, we have $\nu(N) < N^{-k}$.
This definition is meaningful in the standard model of cryptography (without oracle),
in which the desired level of security is at least sub-exponential.
However, the security level in our context (oracle model) can be polynomial at best.
Therefore, we must be satisfied if the adversary is only able to break the protocol with \emph{vanishing} probability.
A~function $\nu: \mathbb{N} \rightarrow \mathbb{R}$ is vanishing if
for any integer $k$, there exists $N_{k}$ such that for all $N\ge N_{k}$, we have $\nu(N) < 1/k$,
or said otherwise if $\nu$ is $o(1)$.

For any positive integer $N$, let $[N]$ denote the set of integers from $1$ to~$N$.
We~describe our novel key establishment protocol assuming the existence of \emph{two}
black-box random functions
\mbox{$f: [N^3] \rightarrow [N^c]$} and  \mbox{$t: [N^3] \rightarrow [N^{c'}]$}
that can be accessed in quantum superposition of inputs.
Constant $c$ is \mbox{chosen} large enough so that $f$ is one-to-one (there is no collision
in the images of $f$), \mbox{except} with vanishing probability. A calculation reminiscent of the birthday paradox shows that choosing $c>6$ is sufficient.
For simplicity, we shall disregard the possibility that $f$ is not one-to-one.

The constant $c'$ is \mbox{chosen} large enough to ensure that, except with vanishing probability,
the function that maps unordered pairs $\{a,b\}$ of distinct elements to $t(a)\oplus t(b)$ is one-to-one,
where ``\,$\oplus$\,'' denotes the bitwise exclusive-or of bit strings identified to integers.\,%
\footnote{\,It~will be convenient, and sometimes required, that ``\,$\oplus$\,'' defines a group operation
on the set $[M]$ on which it acts, for various values of~$M$. For this reason, we shall always
take \mbox{$M=2^\ell$} for some integer~$\ell$, so that elements of $[M]$
can be identified to bit strings, on which the meaning of the bitwise exclusive-or is clear.
As~a slight technicality, the \mbox{$\ell$-bit} string identified
to integer \mbox{$i \in [M]$} should be the binary representation of \mbox{$i-1$} to take account of
the fact that the elements of $[M]$ are integers between $1$ and~$M$, rather than
between $0$ and~\mbox{$M-1$}. Since $M$ will always be a power of $N$ in our protocols,
it suffices to implicitly consider that $N$ is a power of~2.}
A similar calculation, using the fact that $\oplus$ maps uniformly distributed inputs
to uniformly distributed outputs, can be used to show that  $c' > 12$ is sufficient.
Again, we shall systematically assume that this property holds.

Notice that \emph{a single binary random oracle} (which ``implements'' a random function from
the integers to~\mbox{$\{0,1\}$}) could be used to define both functions $f$ and~$t$,
provided we disregard logarithmic factors in our analyses, since $O(\log N)$ calls to the random binary oracle
would suffice to compute $f$ or $t$ on any given input. Indeed, to specify function $f$
using a binary oracle, one needs only $N^{3} \lg N^{c}$ bits from the binary oracle,
where ``\,$\lg$\," denotes the binary logarithm, since
each query $i \in [N^{3}]$ for $f$ requires $\lg N^{c}$ queries to the binary oracle to construct
the integer $f(i) \in [N^{c}]$. The situation is similar for function~$t$. 
For~this reason, it is understood hereinafter that
all our results are implicitly stated ``up~to logarithmic factors''.
Furthermore, multiple function oracles can be encoded using a single binary oracle by pre-pending a fixed bit
string to the beginning of each query. For instance, queries of the form ``\,$0x$\,'' and ``\,$1x$\,'' can be used to
define functions $f$ and $t$, respectively.

As~mentioned in the previous section, the only resource that we consider in our analyses of efficiency
and lower bounds, except in Section~\ref{sec:time}, is the number of calls made to these functions or,
equivalently up~to logarithmic factors, to the underlying binary random oracle.

\needspace{\baselineskip}

\begin{protocol}[Quantum vs quantum]\label{proto:qvq}\hfill
\begin{enumerate}
\item\label{oneQ} 
Alice picks at random $N$ distinct points $x_1, x_2,\ldots, x_N$ in the domain of~$f$
and transmits their encrypted values \mbox{$y_i = f(x_i)$} to~Bob.
Let~\mbox{$X=\{x_i \mid 1 \le i \le N \}$} be the secret set of Alice and define
\mbox{$Y= \{y_i \mid 1 \le i \le N\}$}.
Note that Alice knows both $X$ and~$Y\!$, whereas Bob and the eavesdropper 
know only $Y$ until they make their own queries to the black-box function~$f$.

\item\label{twoQ} Bob finds the pre-images $x$ and $x'$ of \emph{two} distinct random elements in~$Y\!$\@.
For this purpose, he defines the Boolean function \mbox{$g:[N^3] \rightarrow \{0,1\}$} such that
\begin{displaymath}
g(x) = \left\{
\begin{array}{ll} 
	 1 & \textrm{if $f(x) \in Y$}\\[1ex]
	 0 & \textrm{otherwise}.
\end{array}\right.
\end{displaymath}
There are  exactly $N$ values of $x$ such that $g(x)=1$, out of $N^3$ points in the domain of~$g$.
Therefore, Bob can find one such random $x$ with \mbox{\smash{$O\big(\sqrt{N^3/N}\,\big)= O(N)$}} calls to function $f$, using  generalized Grover's algorithm  (or BBHT)~\cite{BBHT}.
He~needs to repeat this process twice in order to get both $x$ and~$x'$, using a
small variation in function~$g$ the second time to make sure that \mbox{$x' \neq x$}.
If~$f(x')$ was transmitted before $f(x)$ at Step~\ref{oneQ}, Bob swaps $x$ and~$x'$.

\item\label{threeQ} Bob sends back  $w=t(x) \oplus t(x')$ to Alice.

\pagebreak  

\item\label{fourQ}
Alice queries oracle $t$ once on each element of~$X$.
No~further query is required for her to find the two elements $x_i$ and $x_j$ in $X$
such that \mbox{$1 \le i < j \le N$} and \mbox{$t(x_i) \oplus t(x_j) = w$}.

\item\label{fiveQ}
The key shared by Alice and Bob is \mbox{$(x_i,x_j)$} for Alice and \mbox{$(x,x')$} for Bob,
which is indeed the same.

\end{enumerate} 
\end{protocol}

All counted, Alice makes $N$ classical queries to $f$ in Step~\ref{oneQ} and $N$ classical queries to $t$ in Step~\ref{fourQ},
whereas Bob makes $O(N)$ quantum queries to $f$ in Step~\ref{twoQ} and two classical queries to $t$ in Step~\ref{threeQ}.
If~the protocol is constructed over a binary random
oracle, it will have to be called $O(N \log N)$ times since it takes $O(\log N)$ binary queries to compute either function on any given input.

\subsection{Quantum Attack}\label{attackQ}

All the obvious (and not so obvious) cryptanalytic attacks against this scheme, such as direct use of Grover's
algorithm (or~BBHT), or even more sophisticated attacks based on amplitude ampli\-fi\-cation~\cite{BHMT}, require the
eavesdropper to call functions $f$ and $t$ a total of $\Omega(N^2)$ times.
However, a~more powerful attack based on the paradigm of quantum walks in Markov chains~\cite{miklos}
enables the eavesdropper to recover Alice and Bob's key with an expected $O\big(N^{5/3}\big)$ calls to~$f$ and
\smash{$O\big(N^{2/3}\big)$} calls to~$t$. This attack is reminiscent of Ambainis' quantum algorithm for element distinctness~\cite{ED},
which can find the two elements $i$ and $j$ such that \mbox{$e(i)=e(j)$} with $O\big(N^{2/3}\big)$ expected queries to single-collision function
$e$ whose domain contains $N$ elements

Ambainis' algorithm uses a quantum
walk on the Johnson graph $J(N,r)$. 
This graph is an undirected graph in which each node contains an \mbox{$r$-sub}\-set
of~$[N]$ and there is an edge
between two nodes if and only if they differ by exactly two elements. Intuitively, we may think of ``walking'' from one node to an
adjacent node by dropping one element
and replacing it by another.
The task is to find a specific \mbox{$k$-sub}\-set of $[N]$.
The nodes that contain this subset are
called \emph{marked}. However, for our cryptanalytic task, we need to walk on a
Hamming graph instead,
in which the nodes contain lists rather than subsets,
so that repetitions are allowed and the order in which items are listed matters.

Magniez, Nayak, Roland and Santha have proved a general theorem, 
showing that quantum search algorithms can be derived from a large
class of classical Markov chains~\cite{MNRS}.
The cost of the resulting quantum algorithm can be written as a function of 
{\SSS}, {\UU} and {\CC}. These are the cost of
\emph{setting-up} the quantum register in a state that corresponds to the stationary distribution,
\emph{updating} it unitarily by moving from one node to an adjacent node,
and \emph{checking} if a node is marked in order to flip its phase if it~is,
 respectively.

\begin{theorem}[\cite{MNRS}]
\label{thm:MNRS}
Let $P$ be a reversible ergodic Markov chain with spectral
gap~\mbox{$\delta >0$}.
Then there is a quantum algorithm that finds a marked node, with high probability, 
provided there is at least one,
at an expected cost in the order of
\[ \textstyle \SSS+\frac{1}{\sqrt{\eps}}\left(\frac{1}{\sqrt \delta}\UU+\CC\right) , \]
where $\eps$ is the
probability that a random node be marked.
\end{theorem}

\needspace{\baselineskip}  

\begin{theorem}\label{UpperBoundQuantum}
There exists a quantum eavesdropping strategy that obtains the key established in Protocol~\ref{proto:qvq}
with $O\big(N^{5/3}\big)$ expected queries to functions $f$ and~$t$.
\end{theorem}

\begin{proof}
Intuitively, we apply Ambainis' algorithm for element distinctness with two modifications:
(1)~instead of looking for $i$ and $j$ such that \mbox{$e(i)=e(j)$},
we are looking for $x$ and $x'$ such that \mbox{$t(x) \oplus t(x')=w$} and
(2)~instead of being able to get randomly chosen values in the image of~$e$ with a single call to oracle~$e$ per value,
we need to get random elements of~$X$ by applying BBHT on the list $Y\!$ and then query $t$ on them, which requires \smash{\mbox{$O\big(\sqrt{N^3/N}\,\big)=O(N)$}}
calls to~$f$ and one query to $t$~per element.
The~second modification explains why the number of calls to~$f$, compared to \smash{$O\big(N^{2/3}\big)$} calls to $e$ for element distinctness,
is multiplied by $O(N)$. Hence, we need $O\big(N^{5/3}\big)$ calls to function~$f$.
To~determine the number of calls required to function~$t$, however, we have to delve deeper into the eavesdropping algorithm.

The composed structure of the problem prevents us from using a quantum walk on the Johnson graph,
which was at the core of Ambainis' algorithm.
Instead, we base the eavesdropping algorithm on a quantum walk on the Hamming graph $H(X,r)$,
in which $X$ is Alice's secret set and $r$ is a number to be determined later.
The nodes of the Hamming graph are labelled by ordered \mbox{$r$-tuples} of elements of~$X$.
There is an edge between two nodes when they differ on precisely one position. One can think of
walking on the graph by replacing exactly one element at each step.
This graph has been used by Childs and Kothari to study the quantum query complexity of minor-close
graph properties~\cite{CK10}. These authors have proved that the spectral gap $\delta$ of
this graph is~$\Omega(1/r)$.
The quantum search algorithm on the Hamming graph defined below
also maintains a data structure at each node consisting of the image of each element of the node
under the random oracle~$t$.

We~are looking for a node
that contains two elements $x$ and $x'$
such that \mbox{$t(x) \oplus t(x')=w$}, where $w$ is the value
announced by Bob in Step~\ref{threeQ} of the protocol.
We use Theorem~\ref{thm:MNRS} 
on the Hamming graph, leading to 
a quantum search algorithm whose cost depends only on parameters {\SSS}, {\UU} and
{\CC}, as mentioned above.
The~{set-up} cost {\SSS} corresponds to finding $r$ random elements of~$X$, and then querying $t$ on them.
Since BBHT can be used to find one such element with $O(N)$ calls to~$f$,
\mbox{${\SSS}$ consists of $O(rN)$} calls to~$f$ and $r$ calls to $t$.
The update cost {\UU} corresponds to finding one random element of~$X$,
which is $O(N)$ calls to~$f$, again by BBHT, and one call to~$t$.
The checking cost {\CC} requires us to decide if there are elements $x$ and $x'$
in the node such that \mbox{$t(x) \oplus t(x')=w$}, which can be done without any additional queries.
Finally, the probability $\eps$ for a random node to be marked is  $\Omega\big(r^2/N^2\big)$.
Putting it all together, the expected cryptanalytic cost is
\begin{eqnarray*}
& & \SSS+ \textstyle \frac{1}{\sqrt{\eps}}\left(\textstyle \frac{1}{\sqrt \delta}\UU+\CC\right)\\*[1ex]
& = &   \SSS + O \left( {\textstyle \frac{N}{r}} ( \sqrt{r} \, \UU + \CC ) \right) \\*[1ex]
&  = & \SSS + O \left( {\textstyle \frac {N}{ \sqrt{r}}} \, \UU  \right) \\*
&  = & O \left(  (rN~\mbox{calls to~$f$ + $r$ calls to $t$}) + {\textstyle \frac {N}{ \sqrt{r}}} \left(N~\mbox{calls to~$f$ + one call to $t$} \right) \right) \\*[0.5ex]
& = & O \left( rN+N^2/ \sqrt{ r} \, \right) \mbox{ calls to $f$~~\textbf{and}~~} O \left( r+N / \sqrt{ r} \, \right) \mbox{ calls to $t$}\,.
\end{eqnarray*}
To minimize the number of calls to~$f$, we choose $r$ so that
\mbox{$rN = N^2/ \sqrt{r}$}, which is \mbox{$r=N^{2/3}$}.
It~follows that a quantum eavesdropper is able to find the key with an expected
\smash{\mbox{$O\big(N^{5/3}\big)$}} calls to~$f$ and $O\big(N^{2/3}\big)$ calls to~$t$.
\end{proof}

\subsection{Lower Bound}\label{lowerboundQ}
We prove in this section that the preceding quantum attack against our quantum protocol is optimal. 
This claim is formalized by the following theorem. 
\begin{theorem}\label{ThmLowerBoundQ}
Any quantum eavesdropping strategy that recovers the key established in Protocol~\ref{proto:qvq}
requires a total of $\Omega\big(N^{5/3}\big)$ queries
to functions $f$ and~$t$, except with vanishing probability.
\end{theorem}

\noindent
The proof of this theorem consists of five steps.
\begin{enumerate}
\item\label{step1} We~define  $\xor$, $\psearch$ and their composition $\HH$ related to the hardness of breaking our protocol;\

\item\label{step2} We~prove a lower bound on the difficulty of solving $\HH$ (Lemma~\ref{lowerbound-hQ}). 
For this purpose, we need a new composition theorem for the generalized
adversary method, whose precise statement and technical proof are postponed to Section~\ref{sec:compthm}; 

\item\label{step:h-reduced-hp} We reduce $\HH$ to a less structured search problem, giving the same desired lower bound
(Lemma~\ref{lowerbound-hpQ}); 
 
 \item\label{step4} Using a random self-reducibility argument, we
 transform the lower bound proven at Step~\ref{step:h-reduced-hp} to make it hold on random inputs,
 except with vanishing probability (Lemma~\ref{lowerbound-hpQ-avg}); and
 
\item \label{red-adv}
We reduce our search problem
to the eavesdropping problem against our protocol.
More precisely, we show that any
attack on our key establishment scheme that would have a non-vanishing probability of success after
$o\big(N^{5/3}\big)$ calls to functions $f$ and~$t$
could be turned into an algorithm capable of solving the search problem more efficiently
than possible (proof of Theorem~\ref{ThmLowerBoundQ}).
\end{enumerate} 

A subtlety arises from the fact that cryptographic security requires lower bounds that hold
on random inputs, except with vanishing probability.
The lower bounds proven in Lemma~\ref{lowerbound-hQ} and Lemma~\ref{lowerbound-hpQ}
hold only for worst-case complexity. The purpose of Lemma~\ref{lowerbound-hpQ-avg} is precisely to 
prove a hardness result in a model that is relevant to cryptography.

In the first step, we compose the $\xor$ problem defined below with $N$ instances
of a search problem with promise called $\psearch$, which defines the starting search problem $\HH$.
For a set $X$ of positive integers, let $X'$ denote \mbox{$X \cup \{0\}$}.
We first define the three problems formally.

\begin{definition}\label{def:2xor}
Consider an arbitrary integer $M$, a \emph{target} $w \in [M]$ and a function
\mbox{$\xorf: [N] \rightarrow [M]$} so that there exist
only two distinct elements $i$ and $j$ in $[N]$
for which \mbox{$\xorf(i) \oplus \xorf(j) =w$}.
The $\xor$ problem consists in finding these elements.
\end{definition}

It is elementary to adapt Ambainis' element distinctness algorithm~\cite{ED} to solve
the $\xor$ problem with $O\big(N^{2/3}\big)$ quantum queries to function~$\xorf$,
a result that we do not actually need.
More the point, Aleksandrs Belovs and Robert \v{S}palek
have proved that this performance is optimal~\cite{BS12}.
More precisely,
given an arbitrary fixed target~$w$, any quantum algorithm for this problem
requires $\Omega\big(N^{2/3}\big)$  quantum queries to~$\xorf$ in the worst case,
provided \mbox{$M \ge N^2$}.

\begin{definition}\label{def:psearch}
Consider the set $P \subset ([M]')^{N^2}$ of strings $(a_1, \ldots, a_{N^2})$ with the promise that
exactly one value is nonzero.
The problem $\psearch: P \rightarrow [M]$ consists in finding this nonzero value by making queries that take
$i$ as input and return $a_i$, $1\leq i \leq N^2$.
\end{definition}

Grover's algorithm~\cite{grover} solves this problem with \mbox{$O\big(\sqrt{N^2}\,\big) = O(N)$} queries,
and the first ever lower bound on the power of quantum computing~\cite{BBBV} shows that this too
is optimal.

\begin{definition}
The problem $\HH$ is defined by  \mbox{$\HH = \xor \circ \psearch^N\!$}. 
\end{definition}

Intuitively, an instance of $\HH$ is obtained by ``hiding'' the inputs of $\xor$
in ``buckets'' in which all but one of the values are~$0$.
More specifically, consider a function
\mbox{$h : [N] \times [N^2] \rightarrow [M]'$} where \mbox{$M \ge N^2$}.
The \mbox{domain} of this function is composed of $N$ buckets of size $N^2$, where
\mbox{$h(i,\cdot)$} corresponds to the $\nth{i}$ bucket for \mbox{$1 \le i \le N$}.
In~bucket $i$, all values of the function are $0$ except for \emph{one
single} \mbox{$x_i \in [N^2]$} for which \mbox{$h(i,x_i)=\xorf(i)$}:
\[ h(i,j) = \left\{ \begin{array}{ll} \xorf(i)~~~ & \textrm{if $j=x_i$}\\[1ex] 0 & \textrm{otherwise}. \end{array} \right. \]
It~follows from the definitions of~$\xorf$ and $h$ that there is a single pair of distinct
$a$ and $b$ in the domain
of~$h$ such that \mbox{$h(a) \oplus h(b) =w$} with $h(a) \neq 0$ and $h(b) \neq 0$.
How difficult is it to find this pair given a black box for function~$h$ but no direct access to~$\xorf$? 

The goal of the second step of the proof is to answer this question, which is given
by Lemma~\ref{lowerbound-hQ}.
Note that this lemma and the next (Lemma~\ref{lowerbound-hpQ}), as well as the
above-mentioned lower bounds on the difficulty of solving
the $\xor$~\cite{BS12} and $\psearch$~\cite{BBBV} problems,
are stated and proved according to the usual complexity-theoretic
\emph{worst-case} paradigm.
This is obviously not what is needed for cryptographic applications.
The~purpose of Lemma~\ref{lowerbound-hpQ-avg} is to remedy this situation.

\begin{lemma}\label{lowerbound-hQ}
Given $h$ structured as above,
finding the pair of distinct elements $a$ and $b$ in the domain of $h$ such that
\mbox{$h(a) \oplus h(b) =w$} with $h(a) \neq 0$ and $h(b) \neq 0$ requires $\Omega\big(N^{5/3}\big)$ quantum queries 
to~$h$.
\end{lemma}

\begin{proof}
The search problem can be modelled as the composition of $\xor$ across buckets
with finding the single nonzero entry in each bucket, which is the problem $\psearch$ defined above.
H{\o}yer, Lee and \v{S}palek have proved a composition theorem for the quantum query complexity
of such functions~\cite{HLS07}, later improved by
Lee, Mittal, Reichard, \v{S}palek and Szegedy~\cite{LMRSS11}.
Unfortunately, their theorems are not applicable in our case because they require the
inner function to be Boolean, which $\psearch$ is not. 

Therefore, a more general composition theorem is needed, whose proof we postpone to
Section~\ref{sec:compthm} because of its level of technicality.
In~particular, our problem becomes a special case of technical Lemma~\ref{TheProof} 
with param\-eters \mbox{$\outern=N$} (the number of buckets),
\mbox{$\innern=N^2$} (the size of the buckets),
\mbox{$\range=M=N^{c'}$} for \mbox{$c'>12$},
and $R=\{ (x,y) \mid x \oplus y = w\}$.
Given that the $\F_R$ of Lemma~\ref{TheProof} is the $\xor$ problem on $\outern$ elements, 
whose quantum query complexity is $\Theta\big(\outern^{2/3}\big)$ since \mbox{$\range \ge \outern^2$},
it~follows that finding the desired elements $a$ and $b$ requires
\[\Omega\big(\outern^{2/3} \innern^{1/2}\big) = \Omega\big(N^{2/3} \sqrt{N^2}\,\big) = \Omega\big(N^{5/3}\big) \]
quantum queries to $h$.
\end{proof}

For Step~\ref{step:h-reduced-hp}, consider a slightly less structured search problem, in which there are no longer buckets,
but there is an added coordinate in the range of the \mbox{function}
\[ h' : [N^3] \rightarrow [N]' \times [M]' \, .\]
The purpose of the added coordinate $[N]'$ will become clear in Step~\ref{red-adv} of the proof.
This function is defined
so that \mbox{$h'(a)=(0,0)$} on all but $N$ points $w_1$, $w_2$,\ldots, $w_N$ in its domain.
On~these points, \mbox{$h'(w_i)=(i,\xorf(i))$},
where $\xorf$ is the function considered at the beginning of Step~\ref{step1}.
We~are required to find the unique
pair of distinct $a$ and $b$ in $[N^3]$ such that \mbox{$\pi_2(h'(a)) \oplus \pi_2(h'(b)) = w$} with $\pi_2(h'(a)) \neq 0$ and $ \pi_2(h'(b)) \neq 0$,
where ``\,$\pi_2$\,'' denotes the projection on the second coordinate. Similarly,  
``\,$\pi_1$\,'' denotes the projection on the first coordinate.

The lower bound on the earlier search problem concerning $h$ implies directly the same lower bound
on the new search problem concerning $h'$ since any algorithm capable of solving the new problem
can be used at the same cost to solve the earlier problem through randomization.
In other words, the~more structured version of the problem cannot be harder than the less structured~one.
The next lemma formalizes the argument above.

\begin{lemma}\label{lowerbound-hpQ}
Given $h'$ structured as above, 
finding the pair of distinct elements $a$ and $b$ in the domain of $h'$ such that
\mbox{$\pi_2(h'(a)) \oplus \pi_2(h'(b)) = w$} with $\pi_2(h'(a)) \neq 0$ and $ \pi_2(h'(b)) \neq 0$ requires $\Omega(N^{5/3})$ quantum queries
to~$h'$.
\end{lemma}

\begin{proof}
Define intermediary function \mbox{$\tilde{h} : [N] \times [N^2] \rightarrow [N]' \times [M]'$} by
\[ \tilde{h}(i,j) = \left\{ \begin{array}{llll} (i,h(i,j))&=&(i,\xorf(i))~~~ & \textrm{if $h(i,j) \neq 0$}\\[1.5ex] (0,h(i,j))&=&(0,0) & \textrm{otherwise}. \end{array} \right. \]
It is elementary to reduce the search problem concerning $h$ to the one concerning~$\tilde{h}$
as well as the search problem concerning~$\tilde{h}$ to the one concerning~$h'$.
Therefore, the lower bound concerning $h$ given by Lemma~\ref{lowerbound-hQ} applies
\emph{mutatis \mbox{mutandis}} to~$h'$.
\end{proof}

To prove Theorem~\ref{ThmLowerBoundQ},
it remains to achieve Step~\ref{red-adv}, in which we show how to reduce the search problem concerning~$h'$ to the cryptanalytic difficulty for the eavesdropper to determine the key that Alice and Bob have established by using our protocol. However, as mentioned above,
the lower bound we proved on the search problem is on its \emph{worst-case} quantum query complexity,
while we want to prove that the cryptanalytic
task of breaking Protocol~\ref{proto:qvq} is hard \emph{except with vanishing probability}.
This probability is to be taken over the random choices of Alice and Bob when they establish their key,
as well as over the random (or quantum) choices made by the eavesdropper when trying to
discover this key.
Therefore, before describing the reduction, we  prove that the search problem concerning $h'$
remains hard, except with vanishing probability, when the instance of the problem is chosen at random.

\begin{lemma}\label{lowerbound-hpQ-avg}
Given a uniformly random $h'$ structured as above, 
finding the pair of distinct elements $a$ and $b$ in the domain of $h'$ such that
\mbox{$\pi_2(h'(a)) \oplus \pi_2(h'(b)) = w$} with $\pi_2(h'(a)) \neq 0$ and $ \pi_2(h'(b)) \neq 0$ requires $\Omega\big(N^{5/3}\big)$ quantum queries
to~$h'$, except with vanishing probability.
\end{lemma}

\begin{proof}
The proof is in two parts. In the first part, we show
that solving the problem with bounded error in the worst case reduces 
to solving the problem with bounded error on average.  In the second part,
we show that the probability of solving the problem with $o\big(N^{5/3}\big)$~queries vanishes as $N$ grows.

Let $\mathcal A$ be an algorithm that solves the search problem 
with error probability $\epsilon$~after
$q$ queries on average under a uniform distribution of the inputs.
We first transform
$\mathcal A$ into an algorithm $\mathcal A'$ solving the same problem with $ q / \epsilon$
queries in the worst case, which errs with probability  $2 \epsilon$ on average.
The algorithm $\mathcal A'$ is obtained by making $\mathcal A$ stop after $q / \epsilon$ queries.
If $\mathcal A$ has terminated, $\mathcal A'$ outputs the value calculated by $\mathcal A$. Otherwise, it
outputs a random value. 
By Markov's inequality,
the probability that $\mathcal A$ answers after $q / \epsilon$ queries is at least $1- \epsilon$.
Therefore, $\mathcal A'$ stops after $q / \epsilon$ queries for any input, and errs with probability at most $2 \epsilon$
on average over uniformly chosen inputs. 

Now, we make the error probability the same for each input. For this purpose, we define
a new algorithm $\mathcal A''$ that uses $\mathcal A'$ as a subroutine. 
Before starting~$\mathcal A'$, 
the algorithm $\mathcal A''$ chooses  uniformly at random $\sigma=(\sigma_1,\sigma_2,\tau_k)$, where $\sigma_1$ and $\sigma_2$
are permutations of $[N^3]$ and $[N]$, respectively, and $\tau_k$ acts on $[M]$ in the following way:
for $x \in [M]$, $\tau_k(x)=x \oplus k$.

Whenever $\mathcal A'$ queries $h'$ on $x \in [N^3]$, $\mathcal A''$  queries $h'$ on
$\sigma_1(x)$. Then, if the answer to the query is of the form $(i, \xorf(i))$, it is replaced by $(\sigma_2(i), \tau_k(\xorf(i)))$.
Note that from the definition of~$\tau_k$, $\xorf(i) \oplus \xorf(j) = 0$ if and only if $\tau_k(\xorf(i))\oplus \tau_k(\xorf(j))=0$.
At the same time, for any $x$, $\tau_k(x)$ is uniformly distributed on $[M]$ when $k$ is chosen uniformly at random.
Finally, if $\mathcal A'$ finds the pair of elements $a$ and $b$, $\mathcal A''$  returns $\sigma_1^{-1}(a)$ and $\sigma_1^{-1}(b)$.
Intuitively, $\sigma$ generates a uniformly random input. Consequently,
the error probability of $\mathcal A''$ for each input equals the expected error of $\mathcal A'$
under uniform distribution of the inputs.
Therefore, we have designed an algorithm $\mathcal A''$ that makes  $q / \epsilon$ queries in the worst case and errs with probability
at most $2 \epsilon$ for each input. This proves that an algorithm solving the search problem concerning $h'$ structured as above, with bounded error on average over uniform distribution of the inputs,
requires $\Omega\big(N^{5/3}\big)$ queries on average.

We now show that the probability of solving the search problem with $q=o\big(N^{5/3}\big)$ queries is vanishing.
Fix an algorithm $\mathcal B$ solving the search problem concerning $h'$ with bounded error on average.
Let $Q$ denote the random variable
that indicates the number of queries made by $\mathcal B$, fix $q=o\big(N^{5/3}\big)$ and 
denote $\delta=Prob[Q\leq q]$,
where the probability is over uniformly distributed inputs.
Intuitively, $\delta$ cannot be large since otherwise it is possible to solve the problem
with less than $q$ queries, contradicting the first part of the proof.

We devise an algorithm $\mathcal B'$ that solves the search problem on $h'$
with approximately $q/\delta$ queries on average.
The algorithm $\mathcal B'$
executes several times
$\mathcal B$ for a fixed number of queries.
Fix two constants $k$ and $k'$. Repeat $k/\delta$ times the following procedure:
choose a uniformly random \mbox{$\sigma =( \sigma_1, \sigma_2, \tau)$},
run $\mathcal B$ on the input transformed as explained in the first part of the proof, and stop it
after $k'q$ queries. If $\mathcal B$ has terminated and returned
a pair of elements $a$ and $b$,
$\mathcal B'$ outputs $\sigma_1^{-1}(a)$ and $\sigma_1^{-1}(b)$. Otherwise, choose another permutation and run $\mathcal B$ again. If no pair of element is found after $k/\delta$ executions of $\mathcal B$, the algorithm $\mathcal B'$ outputs a random value.

The total number of queries made by $\mathcal B'$ is at most $(k\,k')q/\delta$.
Choosing $k$ and $k'$ large enough, there is, with high probability, one execution of $\mathcal B$ that produces a correct answer. This can be checked and $\mathcal B'$ solves, on average, the search problem on $h'$ with bounded
error.
By the lower bound proved in the first part, we get $q/ \delta = \Omega\big(N^{5/3}\big)$,
which gives $\delta= O\big(q/N^{5/3}\big)$ and therefore $\delta =o(1)$.
This proves that an algorithm solving the search problem concerning $h'$ structured as above requires
$\Omega\big(N^{5/3}\big)$ queries, except with vanishing probability. 
\end{proof}

\begin{proof}[Proof of Theorem~\ref{ThmLowerBoundQ}]
Consider any eavesdropping strategy~$\adv$
that listens to the communication between Alice and Bob and tries to determine the key
by querying black-box functions $f$ and~$t$. In~fact, there are no Alice and Bob at all!
Instead, there is a function \mbox{$h' : [N^3] \rightarrow [N]' \times [M]'$} as described above,
for which we want to solve the search problem by using unsuspecting $\adv$ as a resource.

We start by supplying $\adv$ with a completely fake ``conversation'' between ``Alice'' and ``Bob'':
for sufficiently large $c$ and $c'$,
we~choose randomly $N$ points $y_1$, $y_2$,\ldots, $y_N$ in $[N^c]$ and one point \mbox{$w \in [N^{c'}]$}
and we pretend that Alice has sent the $y$'s to Bob and that Bob has responded with~$w$.
We~also choose \mbox{random} functions \mbox{$\hat{f}: [N^3] \rightarrow [N^c]$} and
\mbox{$\hat{t}: [N^3] \rightarrow [N^{c'}]$}.
Note that the selection of $\hat{f}$ and $\hat{t}$ may take a lot of \emph{time}, but this does not
count towards the number of \emph{queries} that will be made of function~$h'$, and our lower
bound on the search problem concerns \emph{only} this number of queries.
We~could be tempted to choose randomly the values of $\hat{f}$ and $\hat{t}$ on the fly, whenever they are needed,
but this is not an option for a quantum process because the values returned must be consistent
whenever the same input is queried in different paths of the superposition.

Now, we wait for $\adv$'s queries to~$f$ and~$t$.
When $\adv$ asks for query $i$ for some~\mbox{$i \in [N^3]$}, there are two possibilities.
\begin{itemize}
\item If $h'(i)=(0,0)$, return $\hat{f}(i)$ and $\hat{t}(i)$ to $\adv$ as value for $f(i)$ and $t(i)$ respectively.
\item Otherwise, return $y_{\pi_1(h'(i))}$ and $\pi_2(h'(i))$ to $\adv$ as value for $f(i)$ and $t(i)$ respectively.
\end{itemize}

The purpose of the additional coordinate in the range of $h'$ now becomes clear.
Whenever $h'(i) \ne (0,0)$, the algorithm $\adv$ should get one of the points $y_1$, $y_2$,\ldots, $y_N$ supplied to $\adv$ at the beginning of this ``artificial'' cryptanalytic task. Without the added coordinate,
one would have a value $\xorf(i)$ in $[M]$ which is usually bigger than $N$, and it would not be possible to map it \emph{one-to-one} to a value in $[N]$ that can be used as index for some $y$. 
Adding a coordinate taking values in $[N]'$ solves this problem. 
Notice that if $\adv$ were classical, one would simply solve this problem using a table that keeps track of any $h'(i) \ne(0,0)$. However, there is no obvious way of maintaining such a process in the quantum case, where
queries can be made in superposition.

Suppose $\adv$ happily returns the pair \mbox{$(i,j)$} for which it was told that \mbox{$t(i) \oplus t(j)=w$},
which is what a successful eavesdropper is supposed to~do. This pair is in fact the answer to the
search problem concerning~$h'$ since \mbox{$t(i) \oplus t(j)=w$} implies that \mbox{$\pi_2(h'(a)) \oplus \pi_2(h'(b)) = w$} with $\pi_2(h'(a)) \neq 0$ and $ \pi_2(h'(b)) \neq 0$,
except with the vanishing probability that \mbox{$\hat{t}(i') \oplus \hat{t}(j')=w$} for some queries $i'$ and $j'$
that $\adv$ asks about~$t$.

Queries asked by  $\adv$ concerning $f$ and $t$ are answered in the same way as they would
be  if $f$ and $t$ were two random functions consistent with the
$Y$ and $w$ announced by Alice and Bob during the execution of a real protocol.
To~see this, remember that  $Y$ (subset of $[N^{c}]$) and
$w$ (element of $[N^{c'}]$) are uniformly picked at random
in both the simulated and the real worlds.
Moreover, the simulated function $f$ is  such that 
$f(i)$ is random when $h'(i)=(0,0)$.  
The remaining $N$ output values are in~$Y\!$, as expected by $\adv$. 
On~the other hand, the simulated function $t$
is random everywhere, except for the two elements $i$ and $j$
for which \mbox{$\hat{t}(i) \oplus \hat{t}(j)=w$},
as it is also expected by~$\adv$.
Therefore, $\adv$~will behave in the
environment provided by the simulation exactly as in the real world.
Since we disregard the vanishing possibility that
there might exist a spurious solution to \mbox{$t(\cdot) \oplus t(\cdot)=w$},
the reduction solves the search problem concerning~$h'$
whenever $\adv$ succeeds in finding the key.
Notice finally that each (new) question asked by $\adv$ to
either $f$ or $t$ translates to one question actually asked to~$h'$.

It~follows that any successful cryptanalytic strategy that makes
$o\big(N^{5/3}\big)$ \mbox{total} queries to $f$~and $t$ would solve the search problem with only
\smash{$o\big(N^{5/3}\big)$} queries to function~$h'$, which is impossible, except with vanishing probability.
This demonstrates the $\Omega\big(N^{5/3}\big)$ lower bound on the cryptanalytic difficulty
of breaking our key establishment protocol, again except with vanishing
probability, which matches the upper bound provided by the
explicit attack given in Section~\ref{attackQ}.
\end{proof}

\section{Fully Classical Key Establishment Scheme}\label{newprotC}

In this section, we revert to the original setting imagined by Merkle in the sense that Alice and Bob are now
purely classical. However, we still allow full quantum power to the eavesdropper. Recall that Merkle's original
schemes~\cite{CS244,merkle78}
are completely broken in this context~\cite{ICQNM}. Is~it possible to restore \emph{some} security in this highly adversarial
(and unfair!)\ scenario? The~following purely classical key establishment protocol, which is inspired by our quantum
protocol described in the previous section, provides a positive answer to this conundrum.

This time, black-box random functions $f$ and $t$ are defined on a smaller \mbox{domain} to
compensate for the fact that classical Bob can no longer use Grover's algorithm.
Specifically, \mbox{$f:[N^2]\rightarrow [N^c]$} and 
$t:[N^2] \rightarrow [N^{c'}]$,
with $c>4$ and $c'>8$ for reasons similar to those explained at the beginning of Section~\ref{newprotQ}.
As~before, these two functions could be replaced by a single binary random oracle.

\needspace{\baselineskip}

\begin{protocol}[Classical vs quantum]\label{proto:cvq}\hfill
\begin{enumerate}
\item\label{oneC} Alice picks at random $N$ distinct points $x_1, x_2,\ldots, x_N$ in the domain of~$f$
and \mbox{transmits} their encrypted values \mbox{$y_i = f(x_i)$} to~Bob.
Let~$X$ and $Y$ denote \mbox{$ \{x_i \mid 1 \le i \le N \}$}
and \mbox{$ \{y_i \mid 1 \le i \le N \}$}, \mbox{respectively}.

\item\label{twoC} Bob finds the pre-images $x$ and $x'$ of two distinct random elements in~$Y\!$.
To~find each one of them, he chooses random values in $[N^2]$ and applies $f$ to them until one is found
whose image is in~$Y\!$. He is expected to succeed after $O(N)$ calls to function~$f$.
If~$f(x')$ was transmitted before $f(x)$ at Step~\ref{oneC}, Bob swaps $x$ and~$x'$.
Until now this is almost identical to Merkle's original scheme, except for the fact that Bob needs to
find two elements of~$X\!$ rather than~one. 

\item\label{threeC} Bob sends back  $w=t(x) \oplus t(x')$ to Alice.

\item\label{fourC}
Alice queries oracle $t$ once on each element of~$X$.
No~further query is required for her to find the
two elements $x_i$ and $x_j$ in $X$
such that \mbox{$1 \le i < j \le N$} and \mbox{$t(x_i) \oplus t(x_j) = w$}.

\item\label{fiveC}
The key shared by Alice and Bob is \mbox{$(x_i,x_j)$} for Alice and \mbox{$(x,x')$} for Bob,
which is indeed the same.

\end{enumerate} 
\end{protocol}

All counted, Alice makes $N$ queries to $f$ in Step~\ref{oneC} and $N$ queries to $t$ in Step~\ref{fourC},
whereas Bob makes $O(N)$ expected queries to $f$ in Step~\ref{twoC} and two queries to $t$ in Step~\ref{threeC}. The total expected number of classical queries to $f$ and $t$ is therefore in $O(N)$ for both legitimate parties.

\subsection{Quantum Attack}\label{attackC}\label{subsec:qattack}
\begin{theorem}\label{UpperBoundClassical}
There exists a quantum eavesdropping strategy that obtains the key established in Protocol~\ref{proto:cvq}
with $O(N^{7/6})$ expected queries to functions $f$ and~$t$.
\end{theorem}

\begin{proof}
A quantum eavesdropper can set-up a quantum walk very similar to the one explained in
Section~\ref{attackQ}, except that now the domain is of size $N^2$ instead of $N^3$.
The eavesdropper can find random elements of~$X$
from his knowledge of~$Y$ with an expected
\[ O\Big(\!\sqrt{N^2/N}\,\Big) = O\big(\sqrt{N}\,\big) \]
calls to~$f$ per element of~$X$.
Therefore, the set-up cost $\SSS$ is \smash{$O\big(r\sqrt{N}\,\big)$} calls to~$f$ and $r$ calls to~$t$,
the update cost $\UU$ is $O\big(\sqrt{N}\,\big)$ calls to~$f$ and one call to~$t$, and
the checking cost \mbox{$\CC$} vanishes.
Furthermore, $\delta$ is still $\Theta(1/r)$ but $\eps$ is $\Omega(r^2/N)$.

Putting it all together, the expected quantum cryptanalytic cost is
\begin{eqnarray*}
& &   \SSS + O \left( {\textstyle \frac {N}{ \sqrt{r}}} \, \UU  \right) \\*
&  = & O \left(  (r\sqrt{N}~\mbox{calls to~$f$ + $r$ calls to $t$}) + {\textstyle \frac {N}{ \sqrt{r}}} (\sqrt{N}~\mbox{calls to~$f$ + one call to $t$} ) \right) \\*
& = & O \left( r\sqrt{N}+N^{3/2}/ \sqrt{r} \, \right) \mbox{ calls to $f$~~\textbf{and}~~} O \left( r+N/\sqrt r \, \right) \mbox{ calls to $t$}\,.
\end{eqnarray*}

To minimize the number of calls to~$f$, we choose $r$ so that
\mbox{$r \sqrt{N} = N^{3/2}/\sqrt{r}$}, which is \mbox{$r=N^{2/3}$}.
It~follows that a quantum eavesdropper is able to find the key with an expected
\mbox{$O\big(N^{7/6}\big)$} calls to~$f$ and $O\big(N^{2/3}\big)$ calls to~$t$.
\end{proof}

\subsection{Lower Bound}\label{lowerboundC}
The proof that it is not possible for the eavesdropper to find the key with fewer than
$\Omega\big(N^{7/6}\big)$ calls to~$f$ and~$t$, except with vanishing probability,
follows the same lines as the lower bound proof in Section~\ref{lowerboundQ}.
It~is therefore possible for purely classical Alice and Bob
to agree on a shared key after calling $f$ and $t$ an expected number of times
in the order of $N$ whereas it is not possible,
even for a \emph{quantum} eavesdropper, to be privy to their secret with an effort in
the same order, except with vanishing probability.

\begin{theorem}\label{thm:lowerboundprotocol2}
Any quantum eavesdropping strategy that recovers the key established in Protocol~\ref{proto:cvq}
requires a total of $\Omega\big(N^{7/6}\big)$ queries
to functions $f$ and~$t$, except with vanishing probability.
\end{theorem}

\begin{proof}
The proof is similar to that of Theorem~\ref{ThmLowerBoundQ}.
The only difference is that Lemma~\ref{TheProof} is applied in Lemma~\ref{lowerbound-hQ}
with parameters \mbox{$\innern=N$} and \mbox{$\range=M=N^{c'}$} for \mbox{$c'>8$},
rather \mbox{$\innern=N^2$} and \mbox{$c'>12$}.
Parameter \mbox{$\outern=N$} remains the same.
The proof then follows \emph{mutatis mutandis}.
\end{proof}

\section{Generalized Protocols}\label{sec:general}
\noindent
In Sections~\ref{newprotQ} and~\ref{newprotC},
we presented a quantum and a classical protocol for key
establishment over a classical channel. 
In both of them, Bob finds the preimages $x$ and $x'$ of two distinct elements sent by Alice,
and he sends her back \mbox{$t(x)\oplus t(x')$}.
A~natural extension of these protocols is for Bob to find $k$~preimages
$b_{1},b_{2},\ldots,b_{k}$, for some constant $k \ge 2$, and send back to Alice
\mbox{$t(b_{1})\oplus \cdots \oplus t(b_{k})$}.

This observation leads to a sequence of quantum and classical protocols,
denoted $Q_k$ and  $C_k$, respec\-tively, 
with the following properties.
In~protocol $Q_k$ (resp.\ $C_k$), Alice and Bob establish a secret key with $O(N)$
quantum (resp.\ classical) queries to the oracle, whereas
the best possible quantum eavesdropping strategy requires
$\Theta\Big(N^{1+\frac{k}{k+1}}\Big)$ (resp.\ $\Theta\Big(N^{\frac12+\frac{k}{k+1}}\Big)$)
expected queries.

These protocols are based on the $\kxor$~problem, which is to search for
$k$ elements among $N$ whose bitwise exclusive-or yields a given value~$w$.
In~this regard, the protocol presented
in Section~\ref{newprotQ} can be referred
to as $Q_2$ and the protocol in Section~\ref{newprotC} as~$C_2$.

The $\kxor$ problem belongs to a large family of problems that have been
extensively studied.
The element distinctness problem is exactly
the $\xor$ problem with $w=0$.
The algorithm proposed by Ambainis for this problem 
was designed for the larger
family of \mbox{$k$-dis}\-tinct\-ness problems~\cite{ED}. In~these, we want to decide
if there exist $k$~elements in the domain of a given function that map to the same image.
The quantum-walk-based algorithm designed by Ambainis queries the function \smash{$O\big(N^{k/(k+1)}\big)$} times.
However, for $k>2$, this algorithm has been improved by Belovs to needing only $O\big(N^{1-2^{k-2}/(2^k-1)}\big)$ queries, using the learning graph paradigm~\cite{B12}.

Ambainis' algorithm applies equally well to other problems, in particular to the
$\kxor$~problem, with the same quantum query complexity, but Belovs' improvement does not carry through.
Indeed, Belovs and \v{S}palek have proved a lower bound~\cite{BS12} matching Ambainis' algorithm~\cite{ED}
for  the following more general problem.
Let $\mathbb G$ be an arbitrary finite Abelian group and $s$ be an element of $\mathbb G$\@.
Given an integer~$k$, the $\ksum$ problem is to decide whether an input
$X=\{x_{1},\ldots,x_{N}\} \in \mathbb G^N$ contains a subset of  $k$ elements that sums to $s$.

\begin{theorem}[\cite{BS12}]
\label{thm:ksumlb}
For a fixed $k$, the quantum query complexity of the $\ksum$ problem is $\Omega(N^{k/(k+1)})$
provided that $\vert \mathbb G \vert \geq N^k$.
\end{theorem}

Choosing $\mathbb G = [M]$ with $\oplus$ as group operation,
Theorem~\ref{thm:ksumlb} yields a lower bound on the query complexity of $\kxor$ provided we have $M \geq N^k$, where $[M]$ is the range of values from which the $N$ elements that define the instances of $\kxor$ are taken.

\subsection{Quantum protocols}\label{generalQprot}

We first introduce formally the sequence of protocols $Q_k$ for any constant $k \geq 2$.
We assume the existence of two black-box random functions
 \mbox{$f: [N^3] \rightarrow [N^c]$} and  \mbox{$t: [N^3] \rightarrow [N^{c'}]$}.
We choose $c>6$ to ensure that a uniformly random $f$ is one-to-one except with vanishing probability.
Similarly, we choose $c'$ such that 
the function that maps 
\mbox{$k$-sets} of elements $\{a_1, \ldots, a_k\}$ to
$t(a_1)\oplus \cdots \oplus t(a_k)$ is one-to-one, except with vanishing probability.
Choosing $c' > 6k$ is sufficient, and also ensures that Theorem~\ref{thm:ksumlb} applies. 

\needspace{\baselineskip}

\begin{protocol}[Generalized quantum vs quantum]\label{proto:QvQk}\hfill
\begin{enumerate}
\item\label{oneQk}
Alice picks at random $N$ distinct points $x_1, x_2,\ldots, x_N$ in the domain of~$f$
and \mbox{transmits} their encrypted values \mbox{$y_i = f(x_i)$} to~Bob.
Let~$X$ and $Y$ denote \mbox{$ \{x_i \mid 1 \le i \le N \}$}
and \mbox{$ \{y_i \mid 1 \le i \le N \}$}, \mbox{respectively}.

\item\label{twoQk} 
Bob finds $k$ distinct elements in~$X\!$, which we call $b_{1},b_{2},\ldots,b_{k}$.
Each element is found using BBHT~\cite{BBHT}, as in Protocol~\ref{proto:qvq}.

\item\label{threeQk} 
Bob sends back $w=t(b_{1})\oplus \cdots \oplus t(b_{k})$ to Alice.

\item\label{fourQk}
Alice queries oracle $t$ once on each element of~$X$.
No~further query is required for her to find $k$ distinct elements of~$X$,
say $a_{1},a_{2},\ldots,a_{k}$, such that \mbox{$t(a_{1})\oplus \cdots \oplus t(a_{k})=w$}.

\item\label{fiveQk}
Alice and Bob reorder their $k$ elements of $X$ obtained at Step~\ref{fourQk} and Step~\ref{twoQk},
respectively, to reflect the order in which their images had been transmitted by Alice at Step~\ref{oneQk}.
The resulting \mbox{$k$-tuple} is their shared key.

\end{enumerate} 
\end{protocol}

All counted, Alice makes exactly $N$ classical queries to $f$ in Step~\ref{oneQk} and $N$ classical queries
to $t$ in Step~\ref{fourQk}, whereas Bob makes $O(kN)$ quantum queries to $f$ in Step~\ref{twoQk},
which is simply $O(N)$ since $k$ is a constant,
and $k$ classical queries to $t$ in Step~\ref{threeQk}.
 
The optimal eavesdropper's attack is again a quantum walk on the Hamming graph.

\begin{theorem}
There exists a quantum eavesdropping strategy that obtains the key established in Protocol~\ref{proto:QvQk}
with $O\Big(N^{1+\frac{k}{k+1}}\Big)$ expected queries to functions~$f$ and~$t$. 
\end{theorem}

\begin{proof}
We apply Theorem~\ref{thm:MNRS} to the walk on the Hamming graph once again.
The~set-up cost ${\SSS}$ is $O(rN)$ calls to~$f$ and $r$ calls to~$t$.
The update cost ${\UU}$ is $O(N)$ calls to~$f$ and one query to~$t$. 
The checking cost {\CC} requires us to decide if there are $k$ distinct elements $c_{1},c_{2},\ldots,c_{k}$
in the node such that \mbox{$w=t(c_{1}) \oplus \cdots \oplus t(c_{k})$},
which can be done without any additional queries. 
Finally, the probability for a random element to be marked is $\Omega\big(r^k / N^k\big)$.
Putting it all together, the expected eavesdropping cost is
\begin{eqnarray*}
& &   \SSS + O \left( {\textstyle \frac{N^{k/2}}{r^{k/2}}} ( \sqrt{r} \, \UU + \CC ) \right) \\*[1ex]
& = & O \left( rN+  \frac{N \cdot N^{k/2}}{r^{(k-1)/2}} \, \right) \mbox{ calls to $f$~~\textbf{and}~~} O \left(r+ \frac{ N^{k/2}}{r^{(k-1)/2}} \, \right) \mbox{ calls to $t$}\,.
\end{eqnarray*}
To optimize the number of calls to~$f$ and $t$, we choose $r$ so that
$rN = N \cdot N^{k/2}/r^{(k-1)/2}$, which is \mbox{$r=N^{k/k+1}$}. The theorem follows when replacing $r$ with this value.
\end{proof}

Finally, we prove a matching lower bound on the number of queries required for the adversary
to recover the key in this protocol.

\begin{theorem}\label{thm:LowerBoundQ2}
Any quantum eavesdropping strategy that recovers the key established
in Protocol~\ref{proto:QvQk} requires a total of $\Omega\Big(N^{1+\frac{k}{k+1}}\Big)$ queries
to functions $f$ and $t$, except with vanishing probability.
\end{theorem}

The proof is similar to those of Theorems~\ref{ThmLowerBoundQ} and~\ref{thm:lowerboundprotocol2}.
The only difference is that security is based on the quantum query complexity of the
$\ksum$ problem~\cite{BS12}, which is a generalization of the $\xor$ problem
used in Sections~\ref{newprotQ} and~\ref{newprotC}.
We~provide some details to emphasize where this change comes into account.

First, consider a function \mbox{$\xorf: [N] \rightarrow [M]$} such that there exists a
single set of $k$ \mbox{distinct} elements $c_{1},c_{2},\ldots,c_{k}$ in $[N]$ for which
\mbox{$\xorf(c_{1})\oplus \cdots\oplus \xorf(c_{k})=w$}.
Then, consider a function
\mbox{$h : [N] \times [N^2] \rightarrow [M]'$}, whose
\mbox{domain} is composed of $N$ ``buckets'' of size $N^2$, where \mbox{$h(i,\cdot)$}
corre\-sponds to the $\nth{i}$ bucket, \mbox{$1 \le i \le N$}\@.
In~bucket $i$, all values of the function are $0$ except for one
single random \mbox{$x_i \in [N^2]$} for which \mbox{$h(i,x_i)=\xorf(i)$}:
\[ h(i,j) = \left\{ 
\begin{array}{ll} \xorf(i)~~~ & \textrm{if $j=x_i$}\\[1ex] 0 & \textrm{otherwise}. 
\end{array} \right. \]

\begin{lemma}[Lower bound for $h$]\label{lem:lowerBoundQvQ2h}
Given $h$ structured as above, finding the
$k$ distinct elements $d_{1},d_{2},\ldots,d_{k}$ in the domain of $h$
such that \mbox{$h(d_{i}) \neq 0$} for all \mbox{$1 \le i\le k$} and
\mbox{$h(d_{1}) \oplus \cdots\oplus h(d_{k})=w$}
requires $\Omega\Big(N^{1+\frac{k}{k+1}}\Big)$ quantum queries to~$h$. 
\end{lemma}
 
\begin{proof}
This problem is a composition of~$\psearch$ and $\kxor$. Therefore, 
we can apply Lemma~\ref{TheProof} and Theorem~\ref{thm:ksumlb}.
It follows that finding  $d_{1},d_{2},\ldots,d_{k}$ requires

\[ \Omega\Big(N^{k/k+1} \sqrt{N^2}\,\Big) = \Omega\Big(N^{1+\frac{k}{k+1}}\Big) \]
quantum queries to $h$.
\end{proof}
The rest of the proof of Theorem~\ref{thm:LowerBoundQ2} is identical to the proofs of Theorems~\ref{ThmLowerBoundQ} and~\ref{thm:lowerboundprotocol2}.

\subsection{Classical Protocols}\label{generalCprot}
\noindent
We now present the sequence of protocols $C_k$ for any constant \mbox{$k \geq 2$}. 
In~protocol $C_k$, a classical Alice establishes a key with a classical Bob after $O(N)$
classical queries to a random oracle in such a way that
the optimal eavesdropping strategy requires 
$\Theta\Big(N^{\frac12+\frac{k}{k+1}}\Big)$ quantum queries to the same random oracle.

We assume the existence of two black-box random functions
 \mbox{$f: [N^2] \rightarrow [N^c]$} and  \mbox{$t: [N^2] \rightarrow [N^{c'}]$}.
We choose $c>4$ to ensure that a uniformly random $f$ is one-to-one except with vanishing probability.
Similarly, we choose $c'$ such that  the function that maps 
\mbox{$k$-sets} of elements $\{a_1, \ldots, a_k\}$ to
$t(a_1)\oplus \cdots \oplus t(a_k)$ is one-to-one, except with vanishing probability.
Choosing $c' > 4k$ is sufficient, and also ensures that Theorem~\ref{thm:ksumlb} applies. 

\needspace{\baselineskip}

\begin{protocol}[Generalized classical vs quantum]\label{proto:CvQk}\hfill
\begin{enumerate}
\item\label{oneCk} 
Alice picks at random $N$ distinct points $x_1, x_2,\ldots, x_N$ in the domain of~$f$
and \mbox{transmits} their encrypted values \mbox{$y_i = f(x_i)$} to~Bob.
Let~$X$ and $Y$ denote \mbox{$ \{x_i \mid 1 \le i \le N \}$}
and \mbox{$ \{y_i \mid 1 \le i \le N \}$}, \mbox{respectively}.

\item\label{twoCk} 
Bob finds $k$ distinct elements in~$X\!$, which we call $b_{1},b_{2},\ldots,b_{k}$.
To find each one of them, he chooses random values in $[N^2]$ and applies $f$ to them
until a new one is found whose image is in $Y\!$.

\item\label{threeCk} 
Bob sends back $w=t(b_{1})\oplus \cdots \oplus t(b_{k})$ to Alice.

\item\label{fourCk}
Alice queries oracle $t$ once on each element of~$X$.
No~further query is required for her to find $k$ distinct elements of~$X$,
say $a_{1},a_{2},\ldots,a_{k}$, such that \mbox{$t(a_{1})\oplus \cdots \oplus t(a_{k})=w$}.

\item\label{fiveCk}
Alice and Bob reorder their $k$ elements of $X$ obtained at Step~\ref{fourCk} and Step~\ref{twoCk},
respectively, to reflect the order in which their images had been transmitted by Alice at Step~\ref{oneCk}.
The resulting \mbox{$k$-tuple} is their shared key.

\end{enumerate} 
\end{protocol}

We leave to the reader the proofs of the upper and lower bounds, as stated in the following two theorems.
They can easily be derived by adapting the analogous proofs for previous protocols.

\begin{theorem}
There exists a quantum eavesdropping strategy that obtains the key established in Protocol~\ref{proto:CvQk}
with $O\Big(N^{\frac12+\frac{k}{k+1}}\Big)$ expected queries to functions~$f$ and~$t$. 
\end{theorem}

\begin{theorem}
Any quantum eavesdropping strategy that recovers the key established
in Protocol~\ref{proto:CvQk} requires a total of $\Omega\Big(N^{\frac12+\frac{k}{k+1}}\Big)$ queries
to functions $f$ and $t$, except with vanishing probability.
\end{theorem}

\section{Time complexity of our protocols}
\label{sec:time}
In Sections~\ref{newprotQ} to~\ref{sec:general},
we only counted the number of \emph{queries} as a measure of complexity.
In this section, we consider the \emph{time} complexity of our protocols,
as well as other ``practical'' issues.
Notice that a lower bound on query complexity is also a lower bound
on time complexity. Therefore, our lower-bound theorems on the eavesdropper's task
apply automatically to the time needed to break our protocols.
Our~goal in this section is to address the issue of when the legitimate players have
time-efficient strategies.

In any real implementation of our protocols, all black-box functions
(modelled until now by random oracles) would have to be replaced by one-way functions.
More specifically, all our proofs of security are conditioned in practice on the existence
(and~use) of functions that cannot be inverted more efficiently than by the exhaustive
search throughout their domain of a preimage, which has not yet been demonstrated.
Nevertheless, this is probably the weakest assumption that can be made in
computationally-based cryptography in order to get provable security.
Furthermore, one might have objected to the notion of making queries in superposition to an oracle,
whereas there are no issues about quantum computing a function on a superposition of inputs
when it is specified by a quantum circuit.
In~any case, we shall assume henceforth that functions $f$ and $t$ from our protocols
can be computed in constant time.
If~this is not the case, the time required by all parties is multiplied by the time it takes to compute
these functions. An~unfair case, which we do not consider here, may occur if these functions can be computed
more efficiently on a quantum computer and if only the eavesdropper is endowed with one.

\needspace{\baselineskip}

All the key-establishment protocols that we have presented share the following structure.
\begin{itemize}
\item Alice picks $N$ points at random and sends the set $Y$ of their images under function $f$ to Bob.
\item Bob searches for a set of preimages of a given size using either a classical or a quantum strategy,
and sends it back to Alice, encoded.
\item Alice recovers Bob's set, which becomes the key under canonical ordering.
\end{itemize}

In the first step,
Alice is only querying the oracle (or~computing function~$f$) and no post-processing is required.
This can be done in $O(N)$~time.

In the second step,
Bob searches for the preimages using either a quantum or a classical strategy.
In~either case, we showed that an expectation of $O(N)$ queries suffices per preimage.
However, this may require an additional $\log N$ factor in terms of time because
each query (whether or not in superposition) is followed by a binary search to check
for membership in~$Y\!$, as already mentioned in Footnote~\ref{foot:computational}
of Section~\ref{firstfix}.
Thus, even though Bob needs only $O(N)$ queries, this translates into $O(N \log N)$ time. 
In~the case of classical protocols (Sections~\ref{newprotC} and~\ref{generalCprot}),
Bob can use universal hashing~\cite{CW} to build a table for~$Y$ in $O(N)$ expected time,
and then use it in constant expected time per search, so that his total expected time
remains in~$O(N)$.
However, there is no obvious way to extend the use of universal hashing to the quantum protocols
because all possible queries would be launched on the hash table in superposition, so that we
would need good hashing performance in the worst case rather than in the expected sense.
It~turns out that a slight variation on our quantum protocols can guarantee a worst-case
linear-time effort for Bob, as we now explain after a brief detour concerning a
seldom-recognized practical issue involving quantum memories.

Our quantum protocols (Sections~\ref{newprotQ} and~\ref{generalQprot})
require Bob to use a quantum memory to run the BBHT algorithm in his search
for random elements of Alice's set~$X$.
Consider for instance the specific description of Step~\ref{twoQ} in Protocol~\ref{proto:qvq}.
It~involves $O(N)$ Grover iterations. Each iteration involves a single
call on function~$f$ (in~a superposition of inputs), followed by a test of membership
in~$Y\!$ of the output of the function.
This test requires the use of a memory of size~$N$ to hold $Y\!$, which must be accessible in a quantum
superposition of its addresses because $f$ is queried in a superposition of all possible inputs
(with nonuniform amplitudes in general) during each Grover iteration inside the BBHT algorithm.
The use of such quantum memories has been a mostly unchallenged standard 
practice in quantum algorithmics at least since the
1997 paper of Ref.~\cite{BHT97}.
Furthermore, in the legitimate protocols presented here (but not in their cryptanalytic attacks),
it~suffices to have a memory that has to be loaded once with classical values (the~elements of
set~$Y$), but that never needs to be updated once the quantum part of Bob's
process---BBHT---has been launched.
Nevertheless, Dominique Unruh has pointed out that it may be unfair to count
such quantum memory accesses at unit or even logarithmic cost in the memory size~\cite{Unruh}.
Be~it as it may, quantum memories would likely be the most technologically challenging aspect
to deploying our protocols, and therefore it would be preferable if their need could be avoided.

We~can modify our quantum protocols to remove any need for quantum memories,
yet without compromising their security. We~only sketch here the modifications that are needed for
Protocol~\ref{proto:qvq}; the corresponding modifications for Protocol~\ref{proto:QvQk} are identical,
\emph{\mbox{mutatis} \mbox{mutandis}}.
Instead of having two functions \mbox{$f: [N^3] \rightarrow [N^c]$} and \mbox{$t: [N^3] \rightarrow [N^{c'}]$},
we need $2N$ functions \mbox{$f_i: [N^2] \rightarrow [N^c]$} and  \mbox{$t_i: [N^2] \rightarrow [N^{c'}]$},
for \mbox{$1 \le i \le N$}.
The first step of the protocol is the same, except that Alice defines each $y_i$ as $f_i(x_i)$.
In~the second step, Bob chooses two indices \mbox{$i<j$} at random in~$[N]$.
He~uses the standard Grover algorithm (there is no need for BBHT anymore) to find the
preimages $x$ and $x'$ of $y_i$ and $y_j$ under $f_i$ and $f_j$, respectively.
This requires \mbox{$O\big(\sqrt{N^2}\,\big)=O(N)$} Grover iterations
\emph{without any need for a quantum memory nor for an additional logarithmic factor in the time analysis}.
The~rest of the protocol is unchanged, except of course that Bob computes $w$ as
\mbox{$t_i(x) \oplus t_j(x')$} and that Alice queries $t_i$ on each of her~$x_i$.
Note that this modified protocol is more similar to Merkle's published ``puzzles''~\cite{merkle78},
whereas our Protocols~\ref{proto:qvq} to~\ref{proto:CvQk} are closer in spirit to
Merkle's original unpublished idea~\cite{CS244}.

The proof of security of the modified protocol is almost identical to the proof given in
Section~\ref{lowerboundQ}, except that it is in fact \emph{simpler} because there is no need
for Lemma~\ref{lowerbound-hpQ} nor for function $h'$ and the ``less structured search problem''
based on~it.
Indeed, the search problem based on~$h$, whose worst-case query complexity
is proved as early as Lemma~\ref{lowerbound-hQ},
can be reduced directly to the cryptanalytic task against the modified protocol.
We~leave details to the reader.  Note that the optimal
attack against Protocol~\ref{proto:qvq}, given in the proof of Theorem~\ref{UpperBoundQuantum},
once adapted against the modified protocol, would still require the
eavesdropper to make use of quantum memories in order to perform quantum walks.
Actually,  the quantum walk paradigm~\cite{miklos} requires quantum memories whose
contents is changed dynamically during the execution of the algorithm,
which would be significantly more challenging from a technological point of view.
However, following the usual paranoia in quantum cryptography, we are willing to
grant the adversary unlimited technology, provided the laws of quantum mechanics
are not violated.

Let us now turn our attention to the final process by which Alice recovers the key from the information
she had kept and the information she has received from Bob.
Although we consider here the situation that corresponds to Protocols~\ref{proto:qvq} to~\ref{proto:CvQk},
the algorithms we give below for Alice can be adapted in an obvious way for use with the
modified protocols that do not require the legitimate parties to use quantum memories,
including a modified version of Protocol~\ref{proto:QvQk}.

We~already know that Alice needs only $N$ queries to function~$t$ since it suffices for her to
obtain once each value of $t(x_i)$ and store them in a classical memory for future use.
However, it may seem at first that she will need $\Omega(N^k)$ \emph{time} to try a significant
proportion of all the possible \mbox{$k$-tuples} among the $N$ stored values of~$t(x_i)$
before hitting upon one whose elements exclusive-or to the value $w$ received from Bob.
Even when \mbox{$k=2$} for Protocols~\ref{proto:qvq} and~\ref{proto:cvq},
a time in $\Omega(N^2)$ for the legitimate parties would obviously be intolerable.
We~now show that Alice can find the key
time-efficiently within the protocols of Sections~\ref{newprotQ} and~\ref{newprotC}.

\begin{theorem}
Alice can find classically two elements $x$ and $x'$ in $X$ such that
$t(x) \oplus t(x')=w$ in worst-case $O(N \log N)$ time or in expected $O(N)$ time.
\end{theorem}

\begin{proof}
By querying $t$ once on each element of $X$, Alice forms \mbox{$Z = \{t(x) \oplus w \mid  x \in X\}$} and
she sorts it in $O(N \log N$) time.
Now, it suffices for her to try each value of $t(x')$, \mbox{$x' \in X$},
until one is found that belong to~$Z$\@. By~definition of $Z$ there will be an \mbox{$x \in X$}
so that \mbox{$t(x') = t(x) \oplus w$}, which implies that $t(x) \oplus t(x')=w$ as required.
Each of the (at~most) $N$ search operations is carried out in $O(\log N)$ time
by virtue of using binary search, for a total of $O(N \log N)$ time in the worst-case.
Alternatively, Alice can use universal hashing~\cite{CW} to build a table for~$Z$ in $O(N)$ expected time,
and then search in it in expected constant time per element of the form $t(x')$, \mbox{$x' \in X$},
for a total of $O(N)$ expected time.
\end{proof}

This idea can be used by a classical Alice to remain time-efficient in Protocols $C_2$ and $Q_2$ of Section~\ref{sec:general}.
A quantum Alice can do better, however, as indicated by the following theorem, making $Q_3$ time-efficient as well.

\begin{theorem}\label{Thm:finding-triple}
Using a quantum strategy, Alice can find the elements $x$, $x'$ and $x''$ in $X$ such that
\mbox{$t(x) \oplus t(x') \oplus t(x'')=w$} in time $O(N \log N)$.
\end{theorem}

\begin{proof}
By querying $t$ once on each element of $X$, Alice forms \mbox{$Z = \{t(x) \oplus w \mid  x \in X\}$} and
she sorts it in $O(N \log N$) time.
Then, she uses Grover's search algorithm to find a pair \mbox{$(x',x'') \in X \times X$} such
that \mbox{$t(x') \oplus t(x'')$} \mbox{belongs} to~$Z$\@. It~takes $O\big(\sqrt{N^2}\,\big)=O(N)$
Grover iterations to find this pair and each iteration takes $O(\log N)$ time by virtue of binary search in~$Z$\@.
Now, Alice can easily find the \mbox{$x \in X$} such that \mbox{$t(x') \oplus t(x'') = t(x) \oplus w$},
which solves the problem since it follows that \mbox{$t(x) \oplus t(x') \oplus t(x'')=w$}, as desired.
\end{proof}

\begin{table}[t]
\caption{Lower bounds on the time needed by quantum eavesdropping against various \mbox{classical} and quantum protocols
when the legitimate parties establish a key in $O(N \log N)$ expected time.}\label{table}
\begin{center}
\begin{tabular}{|c|c|c|c|}
\hline
Alice & Bob &  Protocol & Adversary's lower bound\\ \hline
Classical & Classical & $C_2$ & \rule{0mm}{13pt}$\Omega(N^{7/6})$\\
Classical & Quantum & $Q_2$ &  \rule{0mm}{12pt}$\Omega(N^{5/3})$ \\
Quantum & Quantum & $Q_3$ &  \rule[-6pt]{0mm}{18pt}$\Omega(N^{7/4})$\\
\hline
\end{tabular}
\end{center}
\end{table}

Unfortunately, the quantum algorithm in the proof of Theorem~\ref{Thm:finding-triple}
requires the use of a quantum memory to hold~$Z$. We~do not know how to solve this problem otherwise.
Table~\ref{table} summarizes the time separations that we get between the legitimate parties and the eavesdropper.
In each case, it is assumed that the adversary is quantum mechanical and that the legitimate parties agree
on a shared key in $O(N)$---or~at worst $O(N \log N)$---expected \emph{time}.
Only the last line in the table requires the use of a quantum memory on the part of
the legitimate parties.

\section{A Composition Theorem for Quantum Query Complexity}
\label{sec:compthm}

The central technical part of our lower bounds consists in analysing the complexity
of a function closely related to the hardness of breaking the key establishment 
protocols. This function is
obtained by composing another function with a variant of the search problem, as we describe now.

We wish the show that the following general task is hard. We are given a \mbox{$k$-ary} relation $R$
defined on some domain~$[\range]$,
and asked to find a \mbox{$k$-tuple} that satisfies $R$, with the additional requirement that 
the \mbox{$k$-tuple} is formed from elements in the image of a function $c$. The relation $R$ is known,
so testing if a \mbox{$k$-tuple} satisfies $R$ is free. However, we are charged for 
obtaining information about $c$. The difficulty of our task is compounded by the fact that
we cannot access $c$ by querying it directly. Instead, we can only make queries of the form 
$(i,j)$, which yield $c(i)$ only if $j=v(i)$, where $v$ is also hidden.

More formally, we define the function $h$ to which we can make queries as follows (see Section~\ref{lowerboundQ}).
Recall that $X'$ denotes \mbox{$X \cup \{0\}$}, where $X$ is an arbitrary set of positive integers.
Consider four integer parameters $\outern$, $\innern$, $\range$ and $k$, and three functions
\mbox{$c : [\outern] \rightarrow [\range]$}, \mbox{$v : [\outern] \rightarrow [\innern]$} and
\mbox{$h : [\outern] \times [\innern] \rightarrow [\range]'$}
so that there exists a single \mbox{$k$-tuple} $u_1, \ldots, u_k$ of elements in the image
of $c$ satisfying relation $R$, and
\[ h(i,j) = \left\{ \begin{array}{ll} c(i)~~~ & \textrm{if $j=v(i)$}\\[1ex] 0 & \textrm{otherwise}. \end{array} \right. \]
The task is to find this \mbox{$k$-tuple}, having only access to a black box that computes~$h$.
Let $\F_R$ denote the (easier) problem of finding those elements given a black box that computes
$c$ rather than $h$.
Our problem can then be thought of as \emph{searching} among $\innern$ possibilities for the sole nonzero $h(i,\cdot)$ for each~$i$
and then \emph{solving} $\F_R$ on those elements.
Our main technical lemma, below, gives a lower bound on the number of queries to $h$ that are required
to complete this task.

\begin{lemma}\label{TheProof}
Finding a \mbox{$k$-tuple}  satisfying~$R$, having only access to a black box that computes
a function $h$ structured as above, requires $\Omega\big(Q(\F_R\big) \innern^{1/2})$
quantum queries to $h$, where $Q(\F_R)$ denotes the quantum query complexity of $\F_R$.
\end{lemma}

Using $h$ as an oracle for queries instead of $c$ amounts to composing $\F_R$ with search problem
$\psearch$, defined in Section~\ref{lowerboundQ} (Definition~\ref{def:psearch}).
Recall that $P \subset (A')^\innern$, the domain of $\psearch$, 
is the set of strings $(a_1, \ldots, a_\innern)$ with the promise 
that exactly one of the values is nonzero,
precisely as in the definition of $h$.
The function 
$\F_R$  composed with $\outern$ instances of~$\psearch$,
with $A=[\range]$, is denoted~$\HH$. On input $x \in P^\outern$,
\[ \HH(x)=  \F_R(\psearch(x_1), \ldots, \psearch(x_\outern)) \, . \]

We now prove that
the quantum query complexity of $\HH$ is $\Omega(Q( \F_R) \innern^{1/2})$.
The proof uses the generalized adversary method for quantum query
complexity, which  we briefly review here.
Suppose we want to determine the quantum query complexity
of a problem~${\F}$.  First, we assign weights to pairs of inputs
in order to bring out how hard it is (in terms of number of queries) to
distinguish these inputs apart from one another.  The adversary 
lower bound is the worst ratio of the spectral norm of this
matrix, which measures the overall progress necessary in order for
the algorithm to be correct, to the spectral norms of associated matrices,
which measure the maximum amount of progress that can be achieved by making
a single query. For this purpose, we introduce the matrices $D_q$ defined as follows:
\begin{eqnarray*}
 D_q[x,y]&=& \begin{cases} 0 & \textrm{if $x_q=y_q$} \\[0.5ex]
 					  1 & \textrm{otherwise}.
			\end{cases}
 \end{eqnarray*}
 
\begin{definition}
\label{def:adv}
Fix a function $\F:S\rightarrow T$.  
A symmetric matrix $\Gamma: S\times S \rightarrow \mathbb{R}$
is an \emph{adversary matrix} for $\F$ provided $\Gamma[x,y]=0$ whenever $\F(x)=\F(y)$.
The adversary bound of $\F$ using $\Gamma$ is 
\[\advpm(\F;\Gamma) = \min_q \frac{\| \Gamma\|}{\|\Gamma \bullet D_q\|}
\,\raisebox{0.5ex}{,}\] 
where $\bullet$ denotes entrywise
(or~Hadamard) product, and
$\|A\|$ denotes the spectral norm of $A$ (which is equal to its largest eigenvalue).
The adversary bound $\advpm(\F)$ is the maximum, over all adversary matrices $\Gamma$ for $\F$,
of $\advpm(\F;\Gamma). $
\end{definition}

Since $\HH$ is defined as the composition of $ \F_R$ and $\psearch$, we
would like to apply a composition theorem for the generalized adversary method,
which would say that if a function $\HH=\F\circ \G^\outern$,
then $\advpm(\HH) \ge \advpm(\F) \, \advpm(\G)$.  Unfortunately, the composition
theorems already known in the literature~\cite{HLS07,LMRSS11} require the inner
function to be Boolean, which is not the case here for $\psearch$.  
Since counter-examples can be found, we cannot hope to prove a fully 
general composition theorem in which the
inner function would be an arbitrary function. 
Nevertheless, we prove here a composition theorem 
with $\psearch$ as the inner function.  

\begin{theorem}
\label{thm:composition}
Let $\F: A^\outern \rightarrow B$, $\psearch:P\rightarrow A$ with $P\subseteq (A')^\innern$ 
as described above, and $\HH=\F \circ \psearch^\outern$.  Then
\[ \advpm(\HH) \geq \frac{2}{\pi} \, \advpm(\F) \, \advpm(\psearch) \, . \]
\end{theorem}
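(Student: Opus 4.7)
The plan is to establish the composition by constructing an explicit adversary matrix $\Gamma_\HH$ for $\HH$ that lifts the optimal outer adversary matrix via the spectral structure of the inner one, and then applying Definition~\ref{def:adv} directly. First, I would fix adversary matrices $\Gamma_\F$ and $\Gamma_\G$ (writing $\G = \psearch$ for brevity) achieving $\advpm(\F)$ and $\advpm(\psearch)$ respectively, normalized so that $\max_\ell \|\Gamma_\F \circ D_\ell^\F\| = 1$ and $\max_k \|\Gamma_\G \circ D_k^\G\| = 1$. Since $\Gamma_\G[x,y]=0$ whenever $\psearch(x) = \psearch(y)$, one may decompose $\Gamma_\G = \sum_{a \neq b} \Pi_a \Gamma_\G \Pi_b$, where $\Pi_a$ is the orthogonal projector in $\mathbb{R}^P$ onto the subspace spanned by inputs $z \in P$ with $\psearch(z) = a$. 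The key insight is that this block structure turns $\Gamma_\G$ into an object naturally compatible with the outer matrix, whose rows and columns are indexed by $A^\outern$.

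For the composed matrix, I would set
\[
\Gamma_\HH[x,y] \;=\; \Gamma_\F\bigl[\underline{\psearch}(x),\underline{\psearch}(y)\bigr] \cdot \prod_{i=1}^{\outern} K\bigl(x_i, y_i\bigr),
\]
where $\underline{\psearch}(x) = (\psearch(x_1),\ldots,\psearch(x_\outern))$, and $K$ is a kernel that reproduces $\Gamma_\G[x_i,y_i]/\|\Gamma_\G\|$ when $\psearch(x_i) \neq \psearch(y_i)$ and equals a rank-one ``alignment'' term $\langle \varphi_{x_i} | \varphi_{y_i}\rangle$ when $\psearch(x_i) = \psearch(y_i)$. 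Here $\{|\varphi_z\rangle\}$ is a family of unit vectors chosen from the singular vectors of the blocks $\Pi_a \Gamma_\G \Pi_b$, normalized so that the tensor factors carry consistent scales. The adversary property $\Gamma_\HH[x,y] = 0$ when $\HH(x) = \HH(y)$ follows immediately from $\Gamma_\F[\underline{a},\underline{b}] = 0$ when $\F(\underline{a}) = \F(\underline{b})$.

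The lower bound $\|\Gamma_\HH\| \geq \|\Gamma_\F\| \cdot \|\Gamma_\G\|$ (up to the $2/\pi$ factor) would be demonstrated by exhibiting a witness vector of the form $|\Psi\rangle = \sum_x \mu(x)\, w[\underline{\psearch}(x)]\, |x\rangle$, where $|w\rangle$ is the top eigenvector of $\Gamma_\F$ and $\mu$ is a suitably chosen amplitude factoring across coordinates and weighted by the top singular vectors of the blocks of $\Gamma_\G$. The upper bound on $\|\Gamma_\HH \circ D_{(i,j)}^\HH\|$ exploits the fact that the query position $(i,j)$ only affects coordinate $i$ of the inner layer: the Hadamard product factors as a tensor product with $D_j^\G$ acting on the $i$-th tensor slot and identities on the others, which (together with Schur's inequality) yields a bound of the shape $\|\Gamma_\F \circ D_i^\F\| \cdot \|\Gamma_\G \circ D_j^\G\| = 1$. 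Taking the ratio then produces the product $\advpm(\F)\cdot\advpm(\G)$.

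The main obstacle, and where the explicit constant $2/\pi$ must enter, is handling the ``alignment'' blocks in $K$ on the diagonal $\psearch(x_i)=\psearch(y_i)$. In the Boolean composition of Ref.~\cite{HLS07} these blocks can be eliminated for free because the inner output set has only two elements and one can align singular vectors perfectly; for $\psearch$, whose output range has cardinality $\outern$, perfect alignment is impossible and one must quantify the loss from overlaps $\langle \varphi_{x_i} | \varphi_{y_i}\rangle$ across distinct preimages of the same output. Bounding this loss in the spectral norm of $\Gamma_\HH \circ D_{(i,j)}^\HH$ reduces to a one-dimensional averaging problem whose sharp constant is $2/\pi$, essentially the classical $L^1$-to-$L^\infty$ Fourier constant. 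Reproducing this step carefully, and verifying that the promise structure of $\psearch$ (exactly one nonzero entry) interacts correctly with the block decomposition, is the technical heart of the argument.
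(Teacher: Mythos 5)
Your overall skeleton---build $\Gamma_\HH$ by multiplying an outer adversary entry by a product of inner-layer factors, with a special rule for the diagonal blocks where $\psearch(x_i)=\psearch(y_i)$, then bound $\|\Gamma_\HH\|$ from below and $\|\Gamma_\HH\circ D_\ell\|$ from above---does track the paper's strategy. The paper's diagonal fix is the modified matrix $\overline\Gamma_i=\Gamma_i+\|S_i\|I$, i.e.\ a scalar multiple of the identity placed on the diagonal blocks, which plays roughly the role of your ``alignment'' kernel. However, there are two concrete gaps.

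First, and most seriously, your account of where the constant $2/\pi$ enters is wrong. You attribute it to an alignment loss caused by overlaps of singular vectors across preimages, bounded by an $L^1$-to-$L^\infty$ Fourier constant. In the actual proof there is no such loss. The composition step yields the clean inequality
$$\advpm(\HH)\;\geq\;\advpm(\F)\cdot\advpm(\psearch;\Gamma_i),$$
where $\Gamma_i=(\allone_M-I_M)\otimes\allone_\innern$ is a particular, possibly suboptimal adversary matrix for $\psearch$, giving $\advpm(\psearch;\Gamma_i)=\innern/\sqrt{\innern-1}>\sqrt{\innern}$. The $2/\pi$ arises only when converting this to the optimal $\advpm(\psearch)$: by the tightness of the generalized adversary bound up to a factor $2$ one has $\advpm(\psearch)\leq 2\,\Q(\psearch)\leq\tfrac{\pi}{2}\sqrt{\innern}$, so $\advpm(\psearch;\Gamma_i)\geq\tfrac{2}{\pi}\advpm(\psearch)$. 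No Fourier analysis, no averaging over alignments. An argument built around bounding alignment overlaps would not reproduce this constant and would likely not close at all.

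Second, you gesture at ``exhibiting a witness vector'' for $\|\Gamma_\HH\|\geq\|\Gamma_\F\|\cdot\|\Gamma_\G\|^\outern$ and at a tensor factorization of $D_\ell$, but these are exactly the two places where the work lives. The paper must construct a full orthogonal eigenbasis $\{\delta_{\alpha,c}\}$ of $\Gamma_\HH$ from eigenvectors of $\Gamma_\F$-type auxiliary matrices $A_c$ and of the inner blocks $S_i$, and must then bound the resulting eigenvalues by an inductive block-by-block argument (the diagonal blocks carry $\|S_i\|$ while the off-diagonal ones carry $\lambda_{i,c_i}$, and these differ). Likewise $D_q$ does not factor as $I\otimes\cdots\otimes D_q\otimes\cdots\otimes I$: restricted to the output blocks it has the form $(\allone_M-I_M)\otimes\Delta_q + I_M\otimes\Delta_q'$, and the proof relies on $\Gamma_p\circ D_q$ inheriting the same block structure as $\Gamma_p$ so that the norm claim can be applied a second time. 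Your proposal would need to supply both of these computations; as written it presumes them.
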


The inner function can be slightly more general than $\psearch$. For example,
it could be that the element we search for is hidden in several places.  
The proof also goes through if the instances
of $\psearch$ operate over distinct domains $(A_i')^{\innern_i}$.
We~leave for further research the extent to which our theorem can be generalized
and proceed to prove it as stated.

\begin{proof} 
We prove the theorem using only a few properties of $\psearch$, which 
we describe below.  
In~order to discriminate between the $\outern$ instances of $\psearch$, and to
simplify notation, we write the inner functions as
$\G_1,\ldots,\G_\outern : P \rightarrow A$ with $P\subseteq (A')^\innern$,
$|A|=\range$, and $|P|=\range\innern$. 
We use the fact that $\G_i$ is \mbox{$\innern$-to-1} for all $i$.
Without loss of generality, we assume that inputs are sorted according to the output value.
We use two crucial properties of $\psearch$. These follow from the definition of an 
adversary matrix (Definition~\ref{def:adv}) as well as symmetry properties of $\psearch$.
\begin{enumerate}
\item 
A~\mbox{$\range\innern \times \range \innern $} optimal adversary matrix
$\Gamma_i$ for $\G_i$ can be written in block form with $\range \times \range$
blocks of size $\innern \times \innern $ indexed by pairs of outputs in
which all off-diagonal blocks are identical.  Written in this form,
all $\range$ diagonal blocks are necessarily zero since it is an adversary
matrix.
\item The \mbox{$\range\innern\times \range\innern$} matrices $D_q$, with inputs sorted in the same way,
are also composed of identical
off-diagonal blocks $\Delta_q$ and $\Delta_q'$ on-diagonal blocks.  
Notice that this strongly depends on
$\G_i$, since the inputs are sorted by output value.
\end {enumerate}
For any function $\F$, consider $\HH = \F \circ (\G_1,\ldots, \G_\outern)$.
Denote by $I_\range$ and $\allone_\range$ the $\range\times \range$
identity matrix and all-one matrix, respectively.
We show that
for all adversary matrices $\Gamma_i$ for $\G_i$ 
of the form $\Gamma_i = (\allone_\range-I_\range)\otimes S_i$,
where $S_i$ is a $\innern\times \innern$ symmetric matrix,
\begin{equation}
\label{eqn:whatweshow}
\advpm(\HH) \geq \advpm(\F) \, 
	\min_{i\in [\outern]} \advpm(\G_i;\Gamma_i).
\end{equation}
To prove this, we define an adversary matrix $\Gamma_\HH$ for $\HH$ and
compute its spectrum.
It suffices to compute the largest eigenvalues of $\Gamma_\HH$ and $\Gamma_\HH \bullet D_q$
to give our lower bound on $\advpm(\HH)$.

Let us introduce some notation that we will use throughout the proof.
Inputs to $\HH$ are written $x,y\in P^\outern$. Each $x\in P^\outern$ breaks
into $x=(x_1,\ldots, x_\outern)$. The result of applying the inner functions
to   $x=(x_1,\ldots, x_\outern)$ is written 
$\tilde{x}=(\tilde x_1,\ldots, \tilde x_\outern)=(\G_1(x_1),\ldots, \G_\outern(x_\outern))$.
Each $x_i\in P$, seen as an element of $(A')^\innern$, also breaks down into 
its components, which we write $x_i=((x_i)_1,\ldots,(x_i)_\innern)$, 
where each component $(x_i)_j$ is an element of $A'$.

The structure on $\Gamma_{i}$ 
allows us to consider it as $\range \times \range$ blocks, each of size $\innern\times \innern$,
as follows.
Rows and columns of $\Gamma_i$, indexed by
inputs of the form $x_i=(a_1,\ldots,a_\innern)\in P$, are sorted according to the 
value $\tilde x_i=\G_i(x_i)$.
The submatrix \smash{${\Gamma}_{i}^{(\tilde x_i, \tilde y_i)}$} is the restriction of
$\Gamma_i$
to the rows and columns such that $\G_i(x_i)=\tilde x_i$ and
$\G_i(y_i)=\tilde y_i$.
When $\Gamma_i = (\allone_\range-I_\range)\otimes S_i$, the diagonal blocks are 
the all-zero matrix and the others are equal to the matrix $S_i$.
See~Figure~\ref{Fig:Gamma}.

\begin{figure}[t]
\centering
\begin{large}
$\Gamma_i= \begin{pmatrix}
0 & S_i & \cdots & S_i \\
S_i & 0 & \ddots & S_i \\
\vdots & \ddots & \ddots &\vdots \\
S_i & S_i & \cdots & 0
\end{pmatrix}$
\end{large}
~~~~~
\begin{large}
$D_q= \begin{pmatrix}
\Delta'_q & \Delta_q & \cdots & \Delta_q \\
\Delta_q & \Delta'_q & \ddots & \Delta_q \\
\vdots & \ddots & \ddots &\vdots \\
\Delta_q & \Delta_q & \cdots & \Delta'_q
\end{pmatrix}$
\end{large}
\caption{The matrices $\Gamma_i$ and $D_q$ are decomposed into blocks $\Gamma_i^{(\tilde{x}_i,\tilde{y}_i)}$
and $D_q^{(\tilde{x}_i,\tilde{y}_i)}$, respectively.
Each~block labelled $\tilde{x}_i,\tilde{y}_i$ contains inputs $x_i$ (resp.~$y_i$) that map to the same output value, that is, $\G_i(x_i)=\tilde{x}_i$ (resp.~$\G_i(y_i)=\tilde{y}_i$).}\label{Fig:Gamma}
\end{figure}

We  define $\Gamma_\HH$ on blocks labelled by \mbox{$(\tilde x, \tilde y) \in A^\outern \times A^\outern$}.
The submatrix $\Gamma_\HH^{(\tilde x, \tilde y)}$ is the restriction of  $\Gamma_\HH$
to the rows and columns indexed by \mbox{$x=(x_1, \ldots, x_\outern)$}, \mbox{$y = (y_1, \ldots, y_\outern)\in P^\outern$}
such that
\mbox{$(\G_1(x_1), \ldots, \G_\outern(x_\outern))= \tilde x$} and \mbox{$(\G_1(y_1), \ldots, \G_\outern(y_\outern))= \tilde y$}:
\begin{equation}
\label{eqn:gammaH}
\Gamma_\HH^{(\tilde x, \tilde y)}=\Gamma_\F [\tilde x, \tilde y] \cdot
\left( \bigotimes_{i=1}^\outern {\overline \Gamma}_{i}^{(\tilde x_i, \tilde y_i)}\right).
\end{equation} 
Here, $\Gamma_\F$ is an adversary matrix for $\F$ and instead of $\Gamma_i$, we have used the modified adversary matrices 
\[ \overline \Gamma_i= \Gamma_i + \| S_i\| I_{\range\innern}  \, , \]
which add $\|S_i\|$ to the diagonal, to prevent 
zeroing out the block of $\HH$ when $\tilde x_i$ equals $\tilde y_i$ on one
of its components.  
The fundamental property of $\Gamma_\HH$
is that its norm is the product of the norms of the
matrices~$\Gamma_\F$ and~$S_i$.
\begin{claim}
\label{claim:compo}
For the matrix $\Gamma_\HH$ defined as above,
$\| \Gamma_\HH \|=\| \Gamma_\F \| \cdot \prod_{i=1}^\outern \| S_i\| $.
\end{claim}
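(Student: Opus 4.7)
The plan is to establish the equality $\|\Gamma_\HH\| = \|\Gamma_\F\|\prod_{i=1}^\outern\|S_i\|$ by matching bounds: a lower bound from an explicit joint eigenvector, and an upper bound from a block-diagonalization of $\Gamma_\HH$ along the eigenbases of the $S_i$.

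For the lower bound, I would exhibit a joint top eigenvector. After possibly replacing $S_i$ by $-S_i$ (which preserves the equality we are trying to prove, since it only alters $\Gamma_i=(\allone_M-I_M)\otimes S_i$ by a global sign), I may assume that a unit top eigenvector $v_i$ of $S_i$ satisfies $S_iv_i=+\|S_i\|v_i$. Let $u_\F$ be a unit top eigenvector of $\Gamma_\F$. The crucial observation is that $\overline\Gamma_i^{(\tilde x_i,\tilde y_i)}v_i=\|S_i\|v_i$ in both cases: when $\tilde x_i=\tilde y_i$ the block is $\|S_i\|I_\innern$, and when $\tilde x_i\neq\tilde y_i$ the block is $S_i$, which by our choice of sign acts on $v_i$ as the scalar $\|S_i\|$. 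Substituting into the definition of $\Gamma_\HH$, the vector $\Psi=u_\F\otimes v_1\otimes\cdots\otimes v_\outern$, under the natural reordering of tensor factors that pulls the outer space $(\mathbb{R}^M)^{\otimes\outern}$ ahead of the inner spaces $(\mathbb{R}^\innern)^{\otimes\outern}$, satisfies $\Gamma_\HH\Psi=\pm\|\Gamma_\F\|\prod_i\|S_i\|\cdot\Psi$, so $\|\Gamma_\HH\|\geq\|\Gamma_\F\|\prod_i\|S_i\|$.

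For the matching upper bound, let $\{v_i^{(k)}\}_{k=1}^\innern$ be an orthonormal eigenbasis of $S_i$ with eigenvalues $\lambda_i^{(k)}$. Because both $\|S_i\|I_\innern$ and $S_i$ are diagonal in this basis, the matrix elements $\langle v_i^{(k)},\overline\Gamma_i^{(\tilde x_i,\tilde y_i)}v_i^{(k')}\rangle$ vanish unless $k=k'$, and otherwise equal $\|S_i\|$ when $\tilde x_i=\tilde y_i$ and $\lambda_i^{(k)}$ otherwise. Consequently $\Gamma_\HH$ is block-diagonal in the basis $\{|\tilde x\rangle\otimes\bigotimes_iv_i^{(k_i)}\}$, and its restriction to the block labelled by $\vec k=(k_1,\ldots,k_\outern)$ acts on $(\mathbb{R}^M)^{\otimes\outern}$ as $\prod_i\|S_i\|\cdot\bigl(\Gamma_\F\circ\bigotimes_iP_i\bigr)$, where $P_i=(1-r_i)I_M+r_i\allone_M$ and $r_i=\lambda_i^{(k_i)}/\|S_i\|\in[-1,1]$. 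It therefore suffices to prove $\|\Gamma_\F\circ\bigotimes_iP_i\|\leq\|\Gamma_\F\|$ for every $\vec r\in[-1,1]^\outern$.

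This final Hadamard-product norm bound is the main obstacle, since spectral norms behave poorly under entrywise products in general. When every $r_i\geq-1/(M-1)$, each $P_i$ is positive semidefinite, $\bigotimes_iP_i$ is PSD with constant diagonal equal to $1$, and Schur's product theorem yields $\|\Gamma_\F\circ\bigotimes_iP_i\|\leq\|\Gamma_\F\|$ directly. For the remaining regime I would expand each $P_i$ in its spectral resolution $P_i=(1+(M-1)r_i)\Pi_{\mathbf 1}+(1-r_i)\Pi_{\mathbf{1}^\perp}$ as a signed combination of the two orthogonal projectors, expand $\bigotimes_iP_i$ correspondingly into a signed sum of PSD tensors, and exploit the vanishing of $\Gamma_\F$ on its diagonal to cancel the contributions with negative coefficients before invoking Schur's bound on the remaining PSD pieces. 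Combining with the lower bound yields the claimed equality.
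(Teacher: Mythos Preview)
Your block-diagonalization of $\Gamma_\HH$ along the eigenbases of the $S_i$ is exactly what the paper does (your blocks are the paper's matrices $A_c$), and the reduction of the upper bound to $\|\Gamma_\F\circ\bigotimes_iP_i\|\le\|\Gamma_\F\|$ is correct. There are, however, two gaps. The minor one is your sign-flip WLOG in the lower bound: $\Gamma_\HH$ is assembled from $\overline\Gamma_i=\Gamma_i+\|S_i\|I$, not from $\Gamma_i$, and replacing $S_i$ by $-S_i$ does \emph{not} send $\overline\Gamma_i$ to $\pm\overline\Gamma_i$; the resulting $\Gamma_\HH$ is a genuinely different matrix, so proving the equality for it says nothing about the original. (The paper's own lower bound, taking $c=(1,\ldots,1)$, makes the same tacit assumption $\lambda_{i,1}=+\|S_i\|$.)

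The major gap is the upper bound when some $r_i<-1/(M-1)$. Your signed-projector expansion together with the zero diagonal of $\Gamma_\F$ gives no usable cancellation, and in fact no argument can work here because the inequality $\|\Gamma_\F\circ\bigotimes_iP_i\|\le\|\Gamma_\F\|$ is \emph{false} in that regime. Take $\outern=2$, $M=3$, $r_1=-1$, $r_2=1$, and $\Gamma_\F=I_3\otimes(\allone_3-I_3)-2(\allone_3-I_3)\otimes I_3$, which is symmetric with zero diagonal and hence a legitimate adversary matrix for any injective $\F$; one computes $\|\Gamma_\F\|=5$ but $\|\Gamma_\F\circ(P_1\otimes P_2)\|=6$. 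So the claim, read as a statement about arbitrary symmetric $S_i$, is simply not true. The paper's own route is an induction that peels off one coordinate at a time: for $r_i\in[0,1]$ the matrix $A_c^{(i)}/\|S_i\|$ is a convex combination of $A_c^{(i-1)}$ and its block-diagonal pinching, whence $\|A_c^{(i)}\|\le\|S_i\|\,\|A_c^{(i-1)}\|$. Your Schur argument covers the slightly larger range $r_i\ge-1/(M-1)$ in one shot, without induction. Either suffices for the only case that is actually used later, namely $S_i=\allone_\innern$, whose eigenvalues are $\innern$ and $0$.
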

We defer the proof of this claim and first see how it implies Equation~\ref{eqn:whatweshow}.
Claim~\ref{claim:compo} gives us the norm of $\Gamma_{\HHsub}$, and
it remains to compute
$\max_\ell \| \Gamma_{\HHsub} \bullet D_\ell \|$ (Definition~\ref{def:adv}).
Let us turn to the matrix
$\Gamma_{\HHsub}\bullet D_\ell$ to see that 
it shares the structure of $\Gamma_{\HHsub}$ so we can also apply Claim~\ref{claim:compo}
to compute  its norm.
Recall that the domain of $\HH$ is  $P^\outern$, where  $P\subseteq (A')^\innern$.
An index $\ell$ into an input $x$ to $\HH$ 
decomposes into $p\in[\outern]$, an index within~$x$, and the index $q\in[\innern]$ 
within $x_p$ seen as a vector in $(A')^\innern$.

\begin{claim}
\label{claim:compo2}
$\| \Gamma_{\HHsub} \bullet D_\ell \|={\| \Gamma_\F \bullet D_p\|} \cdot
	{\| S_p \bullet \Delta_q\| }\cdot \prod_{i \neq p}{\| S_i \|}
$.
\end{claim}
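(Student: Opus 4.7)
The plan is to reduce Claim \ref{claim:compo2} to Claim \ref{claim:compo} by recognising that $\Gamma_{\HHsub}\circ D_\ell$ inherits the same block tensor structure as $\Gamma_{\HHsub}$ itself, but with $\Gamma_\F$ replaced by $\Gamma_\F\circ D_p$ and with the $p$-th inner matrix $S_p$ replaced by $S_p\circ \Delta_q$, all other $S_i$ staying untouched. Once this is established, the spectral analysis already carried out in the proof of Claim \ref{claim:compo} yields the asserted product of norms.

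First I would observe that $D_\ell$ factorises across the $\outern$ instances of $\psearch$: with $\ell=(p,q)$, whether the $\ell$-th coordinate of an input changes depends only on the $p$-th block, so $D_\ell[x,y]$ equals $D_q[x_p,y_p]$ times the all-ones factor on every other position. Because the Hadamard product distributes over the tensor product, Equation~\ref{eqn:gammaH} gives
\[ (\Gamma_{\HHsub}\circ D_\ell)^{(\tilde x,\tilde y)} = \Gamma_\F[\tilde x,\tilde y]\cdot \bigl(\overline\Gamma_p\circ D_q\bigr)^{(\tilde x_p,\tilde y_p)}\otimes \bigotimes_{i\neq p}\overline\Gamma_i^{(\tilde x_i,\tilde y_i)}. \]
By the two block-decomposition properties invoked earlier, the off-diagonal blocks of $\overline\Gamma_p\circ D_q$ are $S_p\circ \Delta_q$, whereas its diagonal blocks are $\|S_p\|\, I_\innern\circ \Delta'_q=0$: $I_\innern$ has support only on the diagonal while $\Delta'_q$ vanishes there (an input cannot differ from itself at coordinate~$q$), so their entrywise product is zero. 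Hence $\overline\Gamma_p\circ D_q=(\allone_M-I_M)\otimes (S_p\circ \Delta_q)$.

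The key step is then to observe that we may freely replace the zero diagonal blocks of the $p$-th factor by $\|S_p\circ \Delta_q\|\,I_\innern$, provided we simultaneously attach the factor $D_p[\tilde x,\tilde y]$ to the outer scalar $\Gamma_\F[\tilde x,\tilde y]$: on outer blocks with $\tilde x_p=\tilde y_p$ the product vanishes because $D_p=0$ there, while on outer blocks with $\tilde x_p\neq \tilde y_p$ we have $D_p=1$ and the $p$-th factor is already $S_p\circ \Delta_q$, so the operator is unchanged. After this cosmetic rewriting, $\Gamma_{\HHsub}\circ D_\ell$ has exactly the form to which Claim \ref{claim:compo} applies, with the advertised substitutions, and invoking that claim produces the identity. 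The main obstacle I anticipate is the bookkeeping in this final step: verifying that the joint modification of the $p$-th diagonal blocks and the outer matrix genuinely leaves the matrix unchanged, and checking that the hypotheses of Claim \ref{claim:compo} (symmetry of the inner matrix and the required tensor-block indexing of the outer matrix) remain satisfied, so that its spectral analysis can be reused verbatim.
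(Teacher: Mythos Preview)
Your proposal is correct and follows essentially the same route as the paper. The paper writes the block identity
\[
(\Gamma_\HH \circ D_\ell)^{(\tilde x, \tilde y)}= (\Gamma_\F\circ D_p) [\tilde x, \tilde y] \cdot
 \overline{({\Gamma}_{p} \circ D_q)}^{( \tilde x_p, \tilde y_p)}
\otimes \Bigl( \bigotimes_{i\neq p} {\overline \Gamma}_{i}^{(\tilde x_i, \tilde y_i)}\Bigr)
\]
directly, citing~\cite{HLS07}, and then observes that $\Gamma_p\circ D_q$ inherits the block form $(\allone_M-I_M)\otimes(S_p\circ\Delta_q)$ so that Claim~\ref{claim:compo} applies verbatim with $\Gamma_\F\mapsto\Gamma_\F\circ D_p$ and $S_p\mapsto S_p\circ\Delta_q$. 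You reach the same identity in two steps: first pulling $D_\ell$ into the $p$-th tensor factor to get $\overline\Gamma_p\circ D_q$ (whose diagonal blocks you correctly show vanish since $I_\innern\circ\Delta'_q=0$), and then performing the ``cosmetic rewriting'' that trades the zero diagonal blocks for $\|S_p\circ\Delta_q\|I_\innern$ at the cost of multiplying the outer scalar by $D_p$. This is exactly the content the paper imports from~\cite{HLS07}; your version is simply more self-contained.
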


\begin{proof}[Proof of Claim~\ref{claim:compo2}]
Restricting to the block labelled by 
$\tilde x$ and $\tilde y$, Ref.~\cite{HLS07} shows that
\begin{equation}
\label{eqn:Dell}
(\Gamma_\HH \bullet D_\ell)^{(\tilde x, \tilde y)}= (\Gamma_\F\bullet D_p) [\tilde x, \tilde y] \cdot
 \overline{({\Gamma}_{p} \bullet D_q)}^{( \tilde x_p, \tilde y_p)}
\otimes \left( \bigotimes_{i\neq p} {\overline \Gamma}_{i}^{(\tilde x_i, \tilde y_i)}\right).
\end{equation}

Here we use the second property of $\psearch$:
for each $q$, there exist matrices $\Delta_q$ and $\Delta_q'$ such that
when restricted to blocks, $D_q =  (\allone_\range-I_\range)\otimes \Delta_q  + I_\range \Delta_q'$.
Therefore, $\Gamma_p \bullet D_q$ has the same block structure as $\Gamma_p$ and
by Claim~\ref{claim:compo}, we get the expression for
$\| \Gamma_{\HHsub} \bullet D_\ell\|$ given in Claim~\ref{claim:compo2}.
\end{proof}

Equation~\ref{eqn:whatweshow} follows from 
Claims~\ref{claim:compo} and~\ref{claim:compo2}.
\begin{eqnarray*}
 \advpm(\HH;\Gamma_{\HHsub})
&=& \min_{p,q}\frac{\| \Gamma_\F \|}{\| \Gamma_\F \bullet D_p\|} \frac{ \prod_{i=1}^\outern \| S_i \|}{{\| S_p \bullet \Delta_q\| }\cdot \prod_{i \neq p}{\| S_i \|}} \\[1ex]
&=& \min_{p,q}\frac{\| \Gamma_\F \|}{\| \Gamma_\F \bullet D_p\|} \frac{ \| S_p \| \cdot \prod_{i \neq p} \| S_i \|}{{\| S_p \bullet \Delta_q\| }\cdot \prod_{i \neq p}{\| S_i \|}} \\[1ex]
&=& \min_{p,q}\frac{\| \Gamma_\F \|}{\| \Gamma_\F \bullet D_p\|} 
		\frac {\| S_p \|}{\| S_p \bullet \Delta_q\|}\\[1ex]
&\geq& \min_{p}\left(\frac{\| \Gamma_\F \|}{\| \Gamma_\F \bullet D_p\|} 
		\min_{q}\frac {\| S_p \|}{\| S_p \bullet \Delta_q\|}\right).
\end{eqnarray*}
Using $\| \Gamma_i \|= (\range - 1) \| S_i\|$ and
$\| \Gamma_i \bullet D_p\| = (\range-1) \|S_i \bullet \Delta_p\|$, it follows that
\begin{equation}\label{eq:tilt}
\advpm(\G_p;\Gamma_p) = \min_q \frac{\|S_p \|}{\| S_p \bullet \Delta_q\|},
\end{equation}
and therefore
\begin{eqnarray*}
\advpm(\HH;\Gamma_{\HHsub})
&\geq& \advpm(\F) \cdot \min_q \advpm(\G_q;\Gamma_q)\,.
\end{eqnarray*}

\begin{proof}[Proof of Claim~\ref{claim:compo}]
We first prove $\| \Gamma_\HH \| \leq \| \Gamma_\F \| \cdot \prod_i \| S_i\|$.
The proof proceeds in four steps.
\begin{enumerate}
\item\label{step1claim} We define a set of vectors $\{\delta_{\alpha,c}\}$ in $\mathbb C^{( \innern \range)^\outern}$.
\item\label{step2claim} We prove that they are eigenvectors of $\Gamma_\HH$ and
give the corresponding eigenvalues.
\item\label{step3claim} We show that we have defined all eigenvectors and eigenvalues of~$\Gamma_\HH$.
\item\label{step4claim} We upper bound the eigenvalues in absolute value.
\end{enumerate}

Similarly to the way we built up $\Gamma_\HH$ from $\Gamma_\F$ and the $\Gamma_i$, 
we construct eigenvectors for $\Gamma_\HH$ 
using the eigenvectors for $\Gamma_\F$ and the $S_i$ as building blocks.
We need some more notation before starting the proof.
The spectrum of $S_i $ is $ \{(\delta_{i,j},\lambda_{i,j})\}$ with eigenvalues $|\lambda_{i,1}|\geq \cdots \geq |\lambda_{i,\innern}|$.
For $\tilde x_i, \tilde y_i \in A$, we use the following notation:
\[
\lambda_{i,j}^{\tilde x_i \neq \tilde y_i}=
\begin{cases}
\lambda_{i,j} &\textrm{if $\tilde x_i \neq \tilde y_i$}\\[1ex]
\|S_i \| & \textrm{otherwise}.
\end{cases}
\]
As we can see from the following eigenvalue equation, $\lambda_{i,j}^{\tilde x_i \neq \tilde y_i}$ is the eigenvalue of ${\overline \Gamma}_{i}^{(\tilde x_i, \tilde y_i)}$
associated with the vector $\delta_{i,j}$:
\begin{eqnarray}
\label{eqn:ev}
{\overline \Gamma}_{i}^{(\tilde x_i, \tilde y_i)} \delta_{i,j}&=& 
\begin{cases} \lambda_{i,j} \delta_{i,j} &\textrm{if $\tilde x_i \neq \tilde y_i$}\\[1ex] \nonumber
\| S_i \| \delta _{i,j} & \textrm{otherwise}\end{cases}\\[1.5ex]
&=& \lambda_{i,j}^{\tilde x_i \neq \tilde y_i} \delta_{i,j} \, .
\end{eqnarray}
Given a vector of indices $c=(c_1,\ldots , c_\outern)$,  $c_i\in[\innern]$,
we build up our eigenvectors for $\Gamma_H$ by picking the $\nth{c_i}$ eigenvector for 
the $\nth{i}$ inner function (see Step~\ref{step1claim}).  
For $c=(c_1, \ldots, c_\outern)$, 
the \mbox{$\range^\outern \times \range^\outern$} matrix $A_c$ is defined by blocks
\[ A_c [\tilde x, \tilde y]= \Gamma_\F[\tilde x, \tilde y]
\cdot \prod_{i=1}^\outern \lambda_{i,c_i}^{\tilde x_i \neq \tilde y_i} \]
and we write its spectrum 
\[ \{(\alpha, \mu_{\alpha,c})\} \, . \]

\paragraph{Step~\ref{step1claim}:}
We are ready to define the eigenvectors $\delta_{\alpha, c}$ of $\Gamma_\HH$.
We define the vectors $\delta_{\alpha, c}$ on the block \smash{$\delta_{\alpha, c}^{(\tilde x)}$} of
coordinates $x\in P^\outern$ such that
$(\G_1(x_1), \ldots, \G_\outern(x_\outern))= \tilde x$:
\begin{equation}\label{EqPeter}
\delta_{\alpha, c}^{(\tilde x)}=\alpha[\tilde x] \cdot \left(\bigotimes_{i=1}^\outern \delta_{i,c_i}\right) \, .
\end{equation}
Notice that because of the structure of the $\Gamma_i$, it suffices for our purposes to
build up the eigenvectors of $\Gamma_\HH$ from the eigenvectors of the underlying $S_i$,
which considerably simplifies the proof.

\paragraph{Step~\ref{step2claim}:} We claim that the $\delta_{\alpha, c}$ are eigenvectors
of $\Gamma_\HH$ with corresponding eigenvalues $\mu_{\alpha, c}$.  
We~want to calculate $\Gamma_\HH \delta_{\alpha, c}$. We do this block
by block. Fix $\tilde x \in A^\outern$.  
Using the eigenvalue equation (\ref{eqn:ev}), we get 
\begin{equation}
\label{eqn:eigvect}
\bigotimes_{i=1}^\outern {\overline \Gamma}_{i}^{(\tilde x_i, \tilde y_i)} \bigotimes_{i=1}^\outern  \delta_{i,c_i} =
\prod _{i=1}^\outern \lambda_{i,c_i}^{\tilde x_i \neq \tilde y_i} \bigotimes_{i=1}^\outern  \delta_{i,c_i}.
\end{equation}
Then, by Equations~(\ref{eqn:gammaH}) and~(\ref{EqPeter}),
\begin{eqnarray*}
(\Gamma_\HH \delta_{\alpha, c})^{(\tilde x)} &=& \sum_{\tilde y} \left(\Gamma_\F[\tilde x, \tilde y] \cdot
\bigotimes_i \overline \Gamma_{i}^{(\tilde x_i, \tilde y_i)} \right) \left(\alpha[\tilde y] \cdot \bigotimes_i \delta_{i,c_i}\right)\\[1ex]
&=&\sum_{\tilde y} \Gamma_\F[\tilde x, \tilde y] \alpha [\tilde y] \cdot \prod_i \lambda_{i,c_i}^{\tilde x_i \neq \tilde y_i} \cdot \bigotimes_i \delta_{i,c_i} \text{ (by Equation~\ref{eqn:eigvect})}\\[1ex]
&=& \sum_{\tilde y} A_c^{(\tilde x, \tilde y)} \alpha [\tilde y] \cdot \bigotimes_i \delta_{i,c_i}\\[1ex]
&=& \mu_{\alpha, c} \ \alpha[\tilde x] \cdot \bigotimes_i \delta_{i,c_i}\\[1ex]
&=& \mu_{\alpha, c} \ \delta_{\alpha, c} \, .
\end{eqnarray*}

\paragraph{Step~\ref{step3claim}:}
 We prove that the vectors $\delta_{\alpha, c}$ span $\mathbb C^{(\innern \range)^\outern}$.
There are $\innern^\outern$ matrices $A_c$, and each one has $\range^\outern$ eigenvectors $\alpha$.
Therefore, $\{\delta_{\alpha, c}\}$ is a collection of $(\innern \range)^\outern$ vectors.
We now prove that they are orthogonal.
Notice that
\begin{eqnarray*}
\langle \delta_{\alpha,c} , \delta_{\alpha', c'} \rangle& = &
\sum_{\tilde x} \langle \delta_{\alpha,c}^{(\tilde x)} ,  \delta_{\alpha', c'}^{(\tilde x)}\rangle\\
&=& \sum_{\tilde x} \left( \alpha[\tilde x] \alpha'[\tilde x] \cdot \prod_{i=1}^\outern \langle \delta_{i, c_i}, \delta_{i, c'_i} \rangle \right)\\
&=& \langle \alpha , \alpha' \rangle \cdot \prod_{i=1}^\outern \langle \delta_{i, c_i} , \delta_{i, c'_i} \rangle \,.
\end{eqnarray*}
If $\delta_{\alpha,c} \neq \delta_{\alpha', c'}$, it must be the case that either $c \neq c'$ or $\alpha \neq \alpha'$.
Assume $c \neq c'$. Then for some $i$, $\delta_{i, c_i} \neq \delta_{i, c'_i}$ and since these vectors form an orthonormal basis of $\mathbb C^{\innern}$, we get $\langle \delta_{i, c_i} , \delta_{i, c'_i} \rangle =0$.
Now if $c = c'$, then $\alpha \neq \alpha'$. Again, these vectors form an orthonormal basis of
$\mathbb C^{\range^\outern}$ and we get $\langle \alpha, \alpha' \rangle = 0$.

\paragraph{Step~\ref{step4claim}:} We prove by induction that the eigenvalues $\mu_{\alpha, c}$ of $\Gamma_\HH$ are such that $|\mu_{\alpha, c}| \leq \| \Gamma_\F \|\cdot  \prod_i \|S_i \|$ for all $\alpha$ and~$c$.
For $i \in [\outern]$ and $c \in [\innern]^\outern$, we define a family of matrices~$A_c^{(i)}$ recursively as follows:
\begin{enumerate}
\item $A_c^{(0)} = \Gamma_\F$,
\item $A_c^{(i)} [\tilde x, \tilde y] = A_c^{(i-1)} [\tilde x, \tilde y] \cdot
\lambda_{i,c_i}^{\tilde x_i \neq \tilde y_i}$.
\end{enumerate}
By definition, $A_c^{(\outern)}=A_c$.
We prove by induction that for each~$i$,
\[ \| A_c^{(i)}\|  \leq \|\Gamma_\F \| \cdot \prod_{j=1}^i \|S_{j} \| \, . \]
Since $ \mu_{\alpha, c}$ is an eigenvalue of $A_c$,  this implies 
$| \mu_{\alpha, c} | \leq \| A_c \| \leq
\| \Gamma_\F \| \cdot \prod_i \| S_i \|$.

Since $A_c^{(0)}= \Gamma_\F$, the base case is trivial.
Assume that for some $i$, 
\smash{$\| A_c^{(i-1)}\|  \leq \|\Gamma_\F \| \cdot \prod_{j=1}^{i-1} \|S_{j} \|$}.
By rearranging the rows and columns of \smash{$A_c^{(i-1)}$} as before, we can consider that
it is formed of $\range^2$ blocks with the following structure:
the block labelled $(u,v) \in A\times A$
contains the entries \smash{$A_c^{(i-1)}[\tilde x, \tilde y]$} such that
$\tilde x_{i}=u$ and $\tilde y_{i}=v$.
Now, to form \smash{$A_c^{(i)}$}, the diagonal blocks of \smash{$A_c^{(i-1)}$}, labelled $(u,u)$, are multiplied by $\|S_{i} \|$ and the others are multiplied
by the same factor  $\lambda_{i,c_{i}}$, which is at most $\|S_{i}\|$. 
We~claim that under this operation, the norm
of the matrix increases at most by a factor~$\| S_{i} \|$.

Define \smash{$B= {} \frac{1}{|\lambda_{i,c_{i}}|}A_c^{(i)}  - A_c^{(i-1)}$}. This block diagonal matrix
contains the diagonal blocks of $A_c^{(i-1)}$ multiplied by
\smash{$\tau_{i}=  \frac{1}{ |\lambda_{i,c_{i}}|}\| S_{i}\|-1$}, while
the other blocks are set to 0. 
In other words, $B$ is a direct sum of operators acting on
disjoint subspaces $E_1, \ldots,  E_\range$. It~follows that
\begin{enumerate}
\item any eigenvalue of $B$ is associated with an eigenvector whose
support is in $E_t$ for some $t$, and
\item for any vector $v$ whose support is in $E_t$ for some $t$,
$ \| B v \| \leq  \| \tau_i A_c^{(i-1)} v\|$.
\end{enumerate}
This implies
$\| B \| \leq \tau_i \| A_c^{(i-1)} \|$.
Finally, writing $A_c^{(i)} = |\lambda_{i,c_{i}}| (A_c^{(i-1)} +B)$, we have
\begin{eqnarray*}
\|A_c^{(i)}\| &\leq& |\lambda_{i,c_{i}}| (\|A_c^{(i-1)}\| + \|B \|)\\[1ex]
	&\leq& |\lambda_{i,c_{i}}| (1+|\tau_{i}|) \| A_c^{(i-1)} \|.
\end{eqnarray*}
Since $\lambda_{i,c_{i}}$ is an eigenvalue of $S_{i}$, it is the case that $\tau_{i} \geq 0$, 
so $1+ |\tau_{i}| =  \frac{1}{|\lambda_{i,c_i}|}\|S_{i}\| $. Finally,
\[ \|A_c^{(i)} \| \leq \| S_i \| \cdot \|A_c^{(i-1)}\| \, . \]

The induction hypothesis allows us to conclude the proof of Step~\ref{step4claim}, which completes one direction
in the proof of Claim~\ref{claim:compo}. 

We now prove the other direction: $\| \Gamma_\HH \| \geq \| \Gamma_\F \| \cdot \prod_i \| \Gamma_{i} \|$.
Taking $c=(1,\ldots,1)$,
we have \mbox{$\| \Gamma_\HH\| \geq  \| A_{c } \|$}. By~definition,
$A_{c}[\tilde x, \tilde y]
= \Gamma_\F[\tilde x, \tilde y] \cdot \prod_i \| S_i \|$,
which immediately implies that \mbox{$\| \Gamma_\HH \| \geq \| \Gamma_\F \| \cdot \prod_i \|S_i \|$}.
This completes the proof of Claim~\ref{claim:compo}.  
\end{proof}

To complete the proof of Theorem~\ref{thm:composition}, 
we choose $S_i=\mathds{1}_\innern$ and take $\Gamma_{i} =
(\mathds{1}_\range- I_\range)\otimes \mathds{1}_\innern$ for
the adversary matrix of  $\G_i=\psearch$, for each $i$.
We verify that $D_q$ has the necessary block structure.
Indeed, for each output pair $a,b$ of $\psearch$,
if $a\neq b$ then the block is all zero except in the row and column
indexed by $q$, where it is $1$, since the $\nth{q}$ row corresponds
to the input where $a$ is hidden in position $q$ and the $\nth{q}$ column
is the input where $b$ is hidden in position $q$.
Further, if $a=b$ then the block in $D_q$ is 1 in column $q$ and row $q$
except in position $(q,q)$ where it is zero.
By direct computation, $\| S_i \| = \innern$ and
$\| S_i \bullet \Delta_q \| = \sqrt{ \innern - 1}$. Using Definition~\ref{def:adv} and Equation~\ref{eq:tilt}
(with~\mbox{$\G_i=\psearch$}), it follows that
\begin{equation}\label{eq:youp1}
\advpm(\psearch) \geq \advpm(\psearch;\Gamma_i) = \min_q \frac{\|S_i\|}{\|S_i \bullet \Delta_q\|} =
\frac{\innern} { \sqrt{\innern-1} } > \sqrt{\innern} \,.
\end{equation}
On the other hand, we know from the universality (up~to a factor~2) of the generalized adversary bound~\cite{LMRSS11}
and Ref.~\cite{BBHT} that
\begin{equation}\label{eq:youp2}
\advpm(\psearch) / 2 \leq \Q(\psearch) \leq \frac{\pi}{4}\sqrt{\innern}\,,
\end{equation}
where $\Q$ denotes the quantum query complexity.
Equations~\ref{eq:youp1} and~\ref{eq:youp2} imply that
\[ \advpm(\psearch;\Gamma_i) \ge \frac{2}{\pi} \, \advpm(\psearch)\,. \]
Theorem~\ref{thm:composition} now follows from Equation~\ref{eqn:whatweshow}.
\end{proof}

\begin{proof}[Proof of Lemma~\ref{TheProof}]
Lemma~\ref{TheProof} follows by using the quantum query complexity
lower bounds for $\psearch$, which is $\Omega(\innern^{1/2})$, and the quantum query complexity
of $\F_R$.
\end{proof}

\section{Conclusion and Open Questions}\label{conclusion}
We presented two sequences of protocols $Q_k$ and $C_k$ for \mbox{$k\ge 2$}
with the following properties.
In~protocol $Q_k$, a classical Alice establishes a key with a quantum Bob
after $O(N)$ queries to two black-box random functions,
which can be modelled by a single binary random oracle.
We~proved that the best possible quantum eavesdropping strategy
requires $\Theta\Big(N^{1+\frac{k}{k+1}}\Big)$ queries to the same
black-box functions.
In~protocol $C_k$, purely classical Alice and Bob can establish a key
with $O(N)$ queries to two similar black-box functions.
This time, the best possible quantum eavesdropping strategy
requires 
$\Theta\Big(N^{\frac12+\frac{k}{k+1}}\Big)$ queries to the functions.
Our optimal attacks proceed by quantum walks in Hamming graphs
and our proofs of optimality make use of a new lower-bound composition
theorem of independent interest.
Our~quantum protocols can be modified to avoid the need for quantum memories
in case this is considered technologically too challenging or fundamentally objectionable~\cite{Unruh}.

It follows that key establishment protocols \`a la Merkle can be nearly as secure
in our \mbox{quantum} world as they were thought to be
in the whimsical classical world known to Merkle in 1974:
arbi\-trarily close to quadratic security can be restored.
It~would be interesting to find a quantum protocol that exactly achieves
quadratic security\ldots{} or better!
Indeed, even though it has been proved in the classical case that quadratic
security is the best that can be achieved~\cite{BarMah09},
there is no compelling evidence yet that such a limitation exists in the quantum world.

Perhaps more interestingly in the short term, while quantum computers are not
yet available (but who knows?), secret messages must nevertheless be
transmitted in confidence that they will not become \emph{retroactively} compromised
as soon as a quantum computer is built.
In~this realistic context, Alice and Bob can use our classical
protocols \emph{today} to 
establish a key whose security, even against a \emph{future}
quantum eavesdropper, remains
as good (in~the limit) as what was known to be possible for quantum
Alice and Bob before this work~\cite{ICQNM}.
The main open question would be to break the $N^{3/2}$ barrier
for classical-against-quantum protocols,
or prove that this is not possible.

Even though our protocols $Q_k$ and $C_k$ require classical Alice
to make only $O(N)$ queries to the black-box functions,
she has to work for a \emph{time} in $\Theta\big(N^{\lceil k/2 \rceil}\big)$
to complete her share of the protocol with the best classical algorithms
currently known,
which is more than linear when \mbox{$k \ge 3$}.
Could protocols exist in which Alice would be efficient also from a time perspective?
If~we have to limit ourselves to \mbox{$k=2$} when both Alice and Bob
are classical, the security of $C_2$ against a quantum eavesdropper
is merely $\Omega\big(N^{7/6}\big)$.
In~the case of a \emph{quantum} Alice, we have an algorithm that runs in
linear expected time (neglecting logarithmic factors)
for the case \mbox{$k=3$} as well, yielding a protocol in which quantum
Alice and Bob work for a time and number of queries
proportional to~$N$, yet a quantum eavesdropper must
expend an effort proportional to $N^{7/4}$ to be privy to their secret.
This last protocol, however, requires Alice to make use of a quantum memory.

Our lower bounds prove that it is not possible for an eavesdropper to learn Alice and Bob's key,
except with vanishing probability, without querying
the black-box functions significantly more than the legitimate parties.
However, we have not addressed the possibility for the eavesdropper to obtain efficiently
useful \emph{partial information} about the key. We~leave this important issue for further research.

\section*{Acknowledgements}

We are grateful to Troy Lee, Mohammad Mahmoody-Ghidary, Miklos Santha and Robin Kothari for
insightful discussions, to Krzysztof Pietrzak for pointing out the $O(N \log N)$-time algorithm
that classical Alice can use in Protocols~\ref{proto:qvq} and~\ref{proto:cvq},
and to Dominique Unruh for pointing out the ``practical'' difficulty
(and possibly fundamental inefficiency) arising from
the need to use quantum memories to implement
Protocols~\ref{proto:qvq} and~\ref{proto:QvQk}.
G.\,B.~is~also grateful to Ralph Merkle for his most inspiring Distinguished Lecture
at \textsc{Crypto}\,'05, which sparked this entire line of work.

G.\,B.~is supported in part by Canada's Natural
Sciences and Engineering Research Council (\textsc{Nserc}),
the Institut transdisciplinaire d'informatique quantique (\textsc{Intriq}),
the Canada Research Chair program,
the Canadian Institute for Advanced Research (\textsc{Cifar})
and the Institute for Theo\-ret\-ical Studies at the ETH Z\"urich.
P.\,H.~is supported in part by \textsc{Nserc}, \textsc{Cifar}
and the Canadian Network Centres of Excellence for Mathematics of Information
Technology and Complex Systems (\textsc{Mitacs}).
M.K.~is supported by \textsc{Anr Rpdoc Nlqcc}.
S.\,L.~is supported in part by the European Union 7th framework program \textsc{Qcs} and 
\textsc{EU Chist-Era DIQIP}.
L.\,S.~is supported in part by \textsc{Nserc},
Fundamental Research on Quantum Networks and Cryptography (\textsc{Frequency}) and \textsc{Intriq}.

\end{document}